\newtheorem{theorem}{Theorem}
\newtheorem{lemma}{Lemma}
\newtheorem{proposition}{Proposition}
\DeclareMathOperator*{\argmax}{arg\,max}
\newcommand{\diff}{\mathop{}\!d}
\begin{document}

 \title{{\huge Fluid Antenna-aided Full Duplex Communications:\\ A Macroscopic Point-Of-View}}

\author{Christodoulos Skouroumounis, \IEEEmembership{Member, IEEE},
	and~Ioannis Krikidis, \IEEEmembership{Fellow, IEEE}% <-this % stops a space
	\thanks{Christodoulos Skouroumounis and Ioannis Krikidis are with the IRIDA Research Centre for Communication Technologies, Department of Electrical and Computer Engineering, University of Cyprus, Cyprus, e-mail:\{cskour03, krikidis\}@ucy.ac.cy.}}

\maketitle

\begin{abstract}
The synergy of fluid-based reconfigurable antenna (FA) technology and full-duplex (FD) communications can be jointly beneficial, as FD can enhance the spectral efficiency of a point-to-point link, while the new degree of freedom offered by the FA technology can be exploited to handle the overall interference. Hence, in this paper, an analytical framework based on stochastic geometry is developed, aiming to assess both the outage and average sum-rate performance of large-scale FA-aided FD cellular networks. In contrast to existing studies, where perfect channel state information is assumed, the developed framework accurately captures the impact of channel estimation (CE) on the performance of the considered network deployments, as well as the existence of residual loop-interference (LI) at the FD transceivers. Particularly, we focus on a limited coherence interval scenario, where a novel sequential linear minimum-mean-squared-error-based CE method is performed for all FA ports and LI links, followed by data reception from the port with the strongest estimated channel. By using stochastic geometry tools, analytical expressions for the outage and the average sum-rate performance are derived. Our results reveal that FA-aided FD communications experience an improved average sum-rate performance of around 45\% compared to conventional FD communications.
\end{abstract}

\begin{IEEEkeywords}
Fluid antenna, full-duplex, LMMSE channel estimation, stochastic geometry, outage probability, sum-rate.
\end{IEEEkeywords}

\thispagestyle{empty}
\IEEEpeerreviewmaketitle
\newpage
\section{Introduction}\label{Introduction}
\IEEEPARstart{S}{ixth} generation (6G) wireless networks are envisioned to support an enormous number of end-users equipments (UEs) with increasingly demanding spectral efficiency requirements. To meet this demand, next-generation cellular networks will have to foster beneficial synergies between several innovative technologies, such as full-duplex (FD) radio, millimeter wave (mmWave) communications, and reconfigurable antennas \cite{YAD,ALK}. The concept of in-band FD communications can potentially double the spectral efficiency with respect to the half-duplex (HD) counterpart, due to the simultaneous transmission and reception by using non-orthogonal channels \cite{KHAL}. Nevertheless, owing to the non-orthogonal operation of an FD transceiver, a loop-interference (LI) between the output and the input antennas is occurred. As a result of the tremendously negative effect of the LI on transceivers, the FD technology has been previously considered as an unrealistic approach in wireless communications. Fortunately, this long-held pessimistic view has been challenged in the wake of recent advances in antenna design and introduction of analog/digital signal processing techniques \cite{ELB,GOW}.

As a result, FD communications have attracted extensive attention from the research community and the industry, and the potential gains of the FD technology in small-scale networks compared to the HD operation have been clarified by adopting information-theoretic tools to characterize the associated achievable rate regions \cite{SIL}. In the context of large-scale cellular networks, such as multi-cell scenarios, the employment of FD technology leads to simultaneous uplink (UL)/downlink (DL) transmissions on the same time/frequency resource blocks. Hence, FD communications impose both intra- and out-of-cell co-channel interference, jeopardizing the performance of large-scale multi-cell networks. Several research efforts have been carried out to study the effect of loop- and multi-user interference on the FD performance for large-scale wireless networks, and several techniques have been proposed to alleviate the additional structural interference caused by the FD operation. In the context of large-scale FD cellular networks, the authors in \cite{MOH} illustrate that the employment of directional antennas can significantly contribute to the mitigation of the overall interference and the passive suppression of the LI. Moreover, the beneficial effect of multiple antennas in mitigating the additional interference induced by the FD mode is highlighted in \cite{ATZ}, and the feasibility of FD technology in practical scenarios with moderate values of the LI attenuation is demonstrated. The authors in \cite{LEE2} investigate the concept of hybrid HD/FD cellular networks, indicating that the enhancement of the network throughput is achieved by the operation of different tiers at different duplex modes. In \cite{SKO1}, a location-based FD/HD scheduling method is developed for mmWave communications, illustrating that HD mode is beneficial for the cell-edge UEs to achieve better network performance, as opposed to the cell-center UEs where the FD mode is more efficient. Finally, the beneficial combination of FD radio with heterogeneous mmWave systems is investigated in \cite{SKO2}, where the prominent properties of mmWave communications are leveraged towards combating the severe multi-user interference caused by the FD technology.

Fluid-based reconfigurable antennas (FAs) have been envisioned as a promising technology in future communication devices due to their attractive features such as conformability, flexibility, and reconfigurability \cite{AKY}. In particular, the concept of FAs refers to liquid radiating elements (e.g., mercury, eutectic gallium indium (EGaIn), galinstan, etc.), which are contained in a dielectric holder. With the assistance of a dedicated microelectromechanical system (MEMS), the liquid radiating element can flow in different locations (i.e., a set of predefined ports) within its topological boundaries \cite{HUA}. Hence, FAs are capable to reversibly re-configure their physical configuration (i.e., size, shape and feeding) as well as their electrical properties (e.g., resonant frequency, bandwidth), offering a new degree of freedom in the design of wireless communication systems \cite{TAR}. Thereby, the concept of FA technology has spurred the interest of the research community and the industry, where FA-enabled communications have already been evaluated under various communication scenarios and the wide range of functionalities attained by the employment of reconfigurable FAs has been investigated \cite{BOR,SIN,SHE}. The outage performance and the ergodic capacity of FA-enabled point-to-point communication systems under spatially correlated channels are investigated in \cite{WON1} and \cite{WON2}, respectively; it is shown that a single FA with half-wavelength or less separation between ports can achieve capacity and outage performance similar to the conventional multi-antenna maximum ratio combining system, if the number of ports is sufficiently large. These works are further extended into multi-user communications in \cite{WON3}, where a mathematical framework that takes into account multiple pairs of transmitters and FA-based receivers is introduced. Based on the developed mathematical framework, the authors investigate the outage performance, the achievable rate, and the multiplexing gain of the considered topology, revealing that the overall network performance improves as the number of ports increases at each receiver. Finally, in \cite{KHA}, the outage performance of FA-enabled communications is investigated under a more accurate spatial correlation model between the FAs' ports, while maintaining the analytical tractability of the developed framework. All the above-mentioned studies have also shown that leveraging the spatial diversity offered by the FA concept, a new degree of freedom to handle multi-user interference is provided e.g., selecting the FA port that either maximizes the signal-to-interference-plus-noise ratio (SINR) or suffers from strong multi-user interference that enables interference cancellation techniques. Thus, the synergy of FD radio and FA-enabled networks is of paramount importance to combat multi-user and LI interference (induced by the FD technology) by exploiting the additional degree of freedom offered by the FA technology. However, even though FD radio and FA technology have been well-investigated in several works, their co-design has been disregarded. 

An indispensable prerequisite towards unleashing the full potential of FA-aided FD communications is the existence of accurate channel estimation (CE). More specifically, the acquisition of perfect channel state information (CSI) for the network's direct links i.e., the links between a transmitter and a receiver, enables the optimal selection of the location of the FA's liquid radiating element with the objective to mitigate the multi-user interference. Furthermore, the successful cancellation of the LI strongly depends on the quality of the estimated channel for the LI link i.e., the link between the transmit and the receive antennas at the FD transceivers. In practice, such CSI needs to be acquired in each channel coherence interval at the cost of a channel training overhead that escalates with the number of FA ports \cite{YAZ}. In the context of a limited coherence interval scenario, this training period inevitably leads to a smaller data transmission duration and thus, a reduced overall network performance. Hence, a non-trivial trade-off between the channel training duration and the overall network performance is triggered owing to the CE processes performed in the large-scale FA-aided FD communications. In the context of deterministic network deployments, several works have already investigated the impact of CE on the network performance under different communication scenarios, including basic multiple-input multiple-output (MIMO) settings \cite{LOU,KAY} and cellular networks \cite{JOS,YIN}. However, the aforementioned works do not capture the randomness of actual cellular network deployments. In such a context, system-level performance evaluation will be critical to obtain relevant insights into the design trade-offs that govern such complex systems. Over the past decade, stochastic geometry (SG) has emerged as a powerful mathematical tool, which captures the random nature of large-scale networks \cite{SGbook}. {Due to this fact, stochastic geometry is considered as a key tool to model, analyze and design current cellular networks, permitting the analytical characterization of numerous performance metrics.} Despite the plethora of works evaluating the impact of CE on the performance of a small-scale networks, such as single-cell scenarios, a profound understanding of the effects on large-scale network deployments is however still elusive. In \cite{WU2}, the impact of CE on the coverage performance of random networks is evaluated, capturing the dependence of the optimal training-pilot length on the ratio between the receiver and transmitter deployment densities. In the context of FA systems, in \cite{SKO}, the authors assess the effect of FA technology on large-scale cellular networks, and study the trade-off imposed by the CE on the outage performance; it is unveiled that an optimal number of FAs' ports maximizes the network performance with respect to the linear minimum-mean-squared-error (LMMSE)-based CE process.

Motivated by the above, in this work, we investigate the achieved performance of large-scale FA-aided FD cellular networks under a limited coherence interval scenario, and study the trade-off imposed by the CE on the network performance. We adopt a system-level point-of-view by considering the spatial randomness of both the BSs and the UEs, providing a rigorous mathematical framework to analyze the performance of the considered network topology, in terms of outage probability and average sum-rate. Specifically, the main contributions of this paper are summarized as follows:
\begin{itemize}
	\item {We propose an analytical framework based on SG, which comprises the synergy of FA technology and FD radio, shedding light on the modeling, design, and analysis of large-scale FA-aided FD communication. In particular, the developed mathematical framework characterizes the performance limits of FA-aided FD communication networks in single-antenna systems, and serves as a guideline for more complex multi-antenna network topologies.} Moreover, the developed framework takes into account the capability of both BSs and UEs to operate in FD mode, while all UEs are equipped with a single FA and employ distance-proportional fractional power control. Based on the developed framework, the performance of the considered deployment is evaluated in terms of outage probability and average sum-rate performance, under a limited coherence interval scenario.
	\item Building on the developed mathematical framework, a novel and low-complexity LMMSE-based CE technique for all direct and LI links is proposed. Initially, the CE between the UEs' ports and their serving BSs is performed via pilot-training symbols in a sequential manner, by taking into account the spatial correlation between FAs' ports. Thereafter, the LMMSE-based CE for the LI links at both the BSs and the UEs is acquired. Finally, by aiming to achieve the best end-to-end performance, the FD data transmission and reception from the port with the strongest estimated channel is performed.
	\item By using SG tools, analytical expressions for the outage and the sum-rate performance achieved by the considered network deployments are derived. These analytical expressions provide a quick and convenient method of evaluating the system's performance and obtaining insights into how key parameters affect the performance. In addition, our results demonstrate that an optimal number of FAs' ports maximizes the network performance with respect to the LMMSE-based CE technique. Finally, our results highlight the beneficial synergy of FA technology and FD radio, providing an increase of the achieved average sum-rate performance by around 45\% compared with the conventional static FD communications.
\end{itemize}

\textit{Organization:} In Section \ref{SystemModel}, system model and modeling assumptions are introduced. Section \ref{Preliminaries} presents the statistical properties of the observed interference in the considered network deployments, as well as, the CE and the data transmission operations. Analytical expressions for the performance achieved by the FA-aided FD cellular networks are evaluated in Section \ref{PerformanceAnalysis}. Simulation results are presented in Section \ref{Numerical}, followed by our conclusions in Section \ref{Conclusion}. 
\begin{table*}[t]\centering
	\caption{Summary of Notations}\label{Table1}
	\scalebox{0.85}{
		\begin{tabular}{| l | l || l | l |}\hline
			\textbf{Notation} 		& \textbf{Description} 									& \textbf{Notation} 						& \textbf{Description}\\\hline
			$\Phi,\lambda_b$ 		& PPP of BSs of density $\lambda_b$ 					&  $B_c, T_c$								& Coherence bandwidth and time  \\\hline
			$\Psi$, $\lambda_u$ 	& Point process of UEs of density $\lambda_u$  			&  $L_e, L_t$								& Channel estimation and data transmission period  \\\hline
			$N,\mathcal{N}$ 		& Number and set of FA's ports 							&  $L_d,L_{\rm LI}$						& Channel estimation period for direct and LI links   \\\hline
			$\kappa$, $\lambda$ 	& Scaling constant and communication wavelength 		&  $l_s$									& Switching channel uses  \\\hline
			$d_i$ 					& Displacement of the $i$-th port  						&  $\Lambda$									& Number of pilot-training symbols for each port\\\hline
			$u,\delta$ 			& Average velocity and delay of fluid metal  	&  $f_\mathcal{I}(\cdot),\varpi,\varrho$	& Gamma distribution with parameters $\varpi$ and $\varrho$ \\\hline
			$r_i(\rho)$ 			& Distance between the $i$-th port and the serving BS 	&  $P_u(R), P_{\rm m}$	& Transmit power of UEs with constraint $P_{\rm m}$ \\\hline
			$P$ 					& Transmission power  									&  $\epsilon, \omega$							& Power control parameters \\\hline
			$\ell(r)$ 				& Large scale path loss 								&  SINR$_i$									& SINR observed at the $i$-th port \\\hline
			$\mu_i$ 				& Autocorrelation parameter 							&  $\sigma_{e_i}^2$							& Variance of the channel estimation error for direct link \\\hline
			$L_c$					& Channel uses in each coherence interval 				&  $\sigma_{e_{\rm LI}^u}^2,\sigma_{e_{\rm LI}^b}^2$						& Variance of the channel estimation error for UE's and BS's LI link \\\hline
	\end{tabular}}\vspace{-15pt}
\end{table*}
\section{System Model}\label{SystemModel}
In this section, the details of the considered system model are provided. A list of the main mathematical notations is presented in Table \ref{Table1}.
\subsection{Network Topology}
Consider a homogeneous cellular network, where the BSs are uniformly distributed in $\mathbb{R}^2$ according to independent homogeneous Poisson point process (PPP) $\Phi$, with density $\lambda_b$. We also consider a set of UEs, whose locations follow an arbitrary independent point process with spatial density $\lambda_u\gg  \lambda_b$. We assume that all BSs are equipped with a single omnidirectional antenna, while all UEs are equipped with a FA (detailed description in Section \ref{FAmodel}). Furthermore, we consider the scenario where all BSs and UEs operate in FD mode. {Since multiple UEs can exist in the coverage area of a BS, a scheduling mechanism is employed, which schedules all UEs for their communication with the assigned BS at different time-frequency resources. That means no intra-cell interference exists since intra-cell users are served with orthogonal time-frequency resources.} We consider the nearest-BS association rule i.e., the typical UE at the origin communicates in both DL and UL transmission with its closest BS located at $x_{\rm 0}\in\mathbb{R}^2$, referred as \textit{tagged BS}, and its link with the typical UE is denoted as \textit{typical link} (see Fig. \ref{Voronoi}). Hence, the random variable representing the distance between the typical UE and the tagged BS i.e., $\rho=\|x_0\|$, follows a probability density function (pdf), that is given by \cite{SGbook}
\begin{equation}\label{ContactPDF}
	f_R(\rho) = \frac{{\rm d}\mathbb{P}[R>\rho]}{{\rm d}\rho}=2\pi\lambda_b\rho\exp\left(-\pi\lambda_b\rho^2\right),
\end{equation}
where $\mathbb{P}[R>\rho]$ is the complementary cumulative distribution function of $R$, that is given by $\mathbb{P}[R>\rho]=\exp\left(-\pi\lambda_b\rho^2\right)$.

\begin{figure}
	\centering\includegraphics[width=0.45\linewidth]{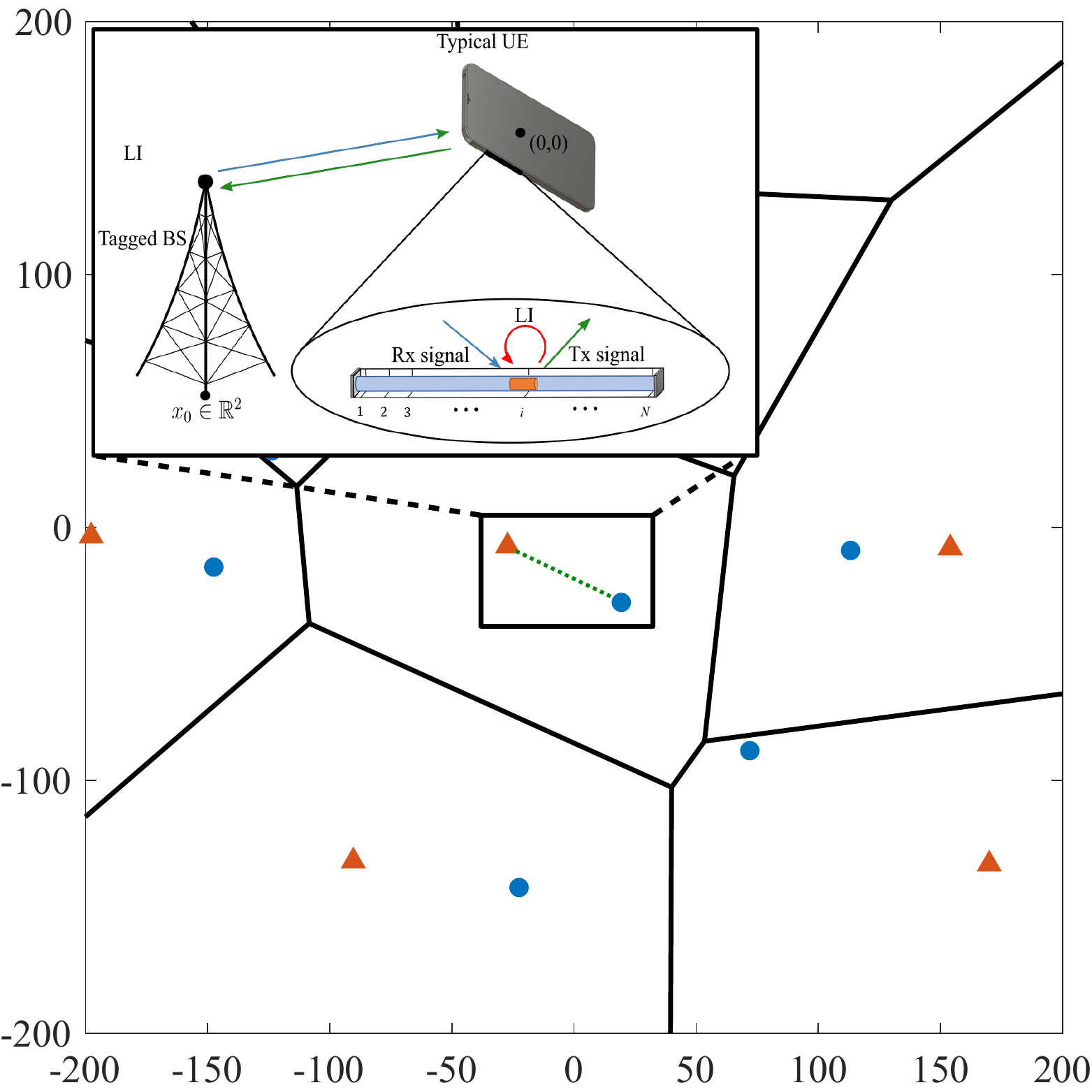} 
	\caption{The Voronoi tessellation of a FA-aided FD cellular network, where the BSs and the UEs are represented by triangles and circles, respectively. Dotted line represents the typical link.}\label{Voronoi}\vspace{-20pt}
\end{figure}
\subsection{Fluid antenna model}\label{FAmodel}
The adopted architecture of a fluid-based reconfigurable antenna is depicted in Fig. \ref{FA}. In particular, a drop of liquid metal (e.g., EGaIn) is positioned in a tube-like linear micro-channel or capillary filled with an electrolyte, within which the fluid is able to move freely. More specifically, the location of the antenna (i.e., fluid metal) can be switched to one of the $N$ predetermined locations (also known as ``ports''), that are evenly distributed along the linear dimension of a FA, $\kappa\lambda$, where $\lambda$ is the wavelength of communication and $\kappa$ is a scaling constant. An abstraction of the FA concept is considered, where an antenna at a given location is treated as an ideal point antenna \cite{WON1}. By aiming to ease the mathematical analysis, the first port of a FA is treated as an auxiliary reference port \cite{WON1}. As depicted in Fig, \ref{FA}, the displacement between the first port and the $i$-th port can be measured as
\begin{equation*}
d_i = \left(\frac{i-1}{N-1}\right)\kappa\lambda,\ \forall\ i\in\mathcal{N},
\end{equation*}
where $\mathcal{N}=\{1,2,\dots,N\}$. Hence, the distance between the $i$-th port of the typical UE and the tagged BS, is given by
\begin{equation}\label{Distance}
r_{i}(\rho) = \sqrt{\rho^2+\frac{\kappa^2\lambda^2}{4}\left(\frac{N-2i+1}{N-1}\right)^2},\ \forall\ i\in\mathcal{N},
\end{equation}
where $\rho$ is the distance of the typical link. {It is important to mention here that, by considering $N=1$, the adopted FA model models the conventional (i.e., static) omnidirectional antenna, enabling the evaluation of the performance achieved by the conventional FD communications.}

Regarding the flow motion of a fluid metal within an electrolyte-filled capillary, we assume that each UE is equipped with a MEMS. In particular, the flow motion of the fluid metal is induced by the application of a voltage gradient along the FA, as shown in Fig. \ref{FA}, as a consequence of the electrocapillary effect \cite{LEE}. Hence, in accordance to the Hagen–Poiseuille equation \cite{LEE}, the fluid metal achieves an average velocity that can be evaluated as
\begin{equation*}
u=\frac{q}{6\mu}\frac{D}{L}\Delta\phi,
\end{equation*}
where $q$ denotes the initial charge in the electrical double layer for the liquid radiating element (e.g. EGaIn), $\Delta\phi$ represents the voltage difference between the two ends of the fluid metal which is much smaller than the externally applied voltage $U$ i.e., $\Delta\phi\ll U$, $\mu$ is the viscosity of the liquid radiating element at $20^oC$, $D$ and $L$ represent the thickness and the length of the fluid metal, respectively. Due to the finite velocity of fluid metals, a non-zero time period is required for the movement of the fluid metal from a port to another, referred as ``delay'', opposed to the impractical assumption of instantaneous port switching that is widely-adopted in the existing works by considering high-velocity fluid metals. In particular, the time required (delay) by the fluid metal to move from the $i$-th port to the $(i+1)$-th port, is given by
\begin{equation*}
\delta =\frac{\kappa\lambda}{u}\left(\frac{1}{N-1}\right),
\end{equation*}
where $i\in\mathcal{N}$. \footnote{For example, a fluid metal velocity of $116$ mm/s is obtained by assuming $\Delta\phi=0.1$ V and $L/D=5$ \cite{GOU}, and hence, a delay of $54\ \mu$s is experienced between neighbouring ports of a FA with $N=20$, $\kappa = 0.2$, and $\lambda=0.06$ cm \cite{WON1}.}

\begin{figure}[t]
	\centering\includegraphics[width=0.3\linewidth]{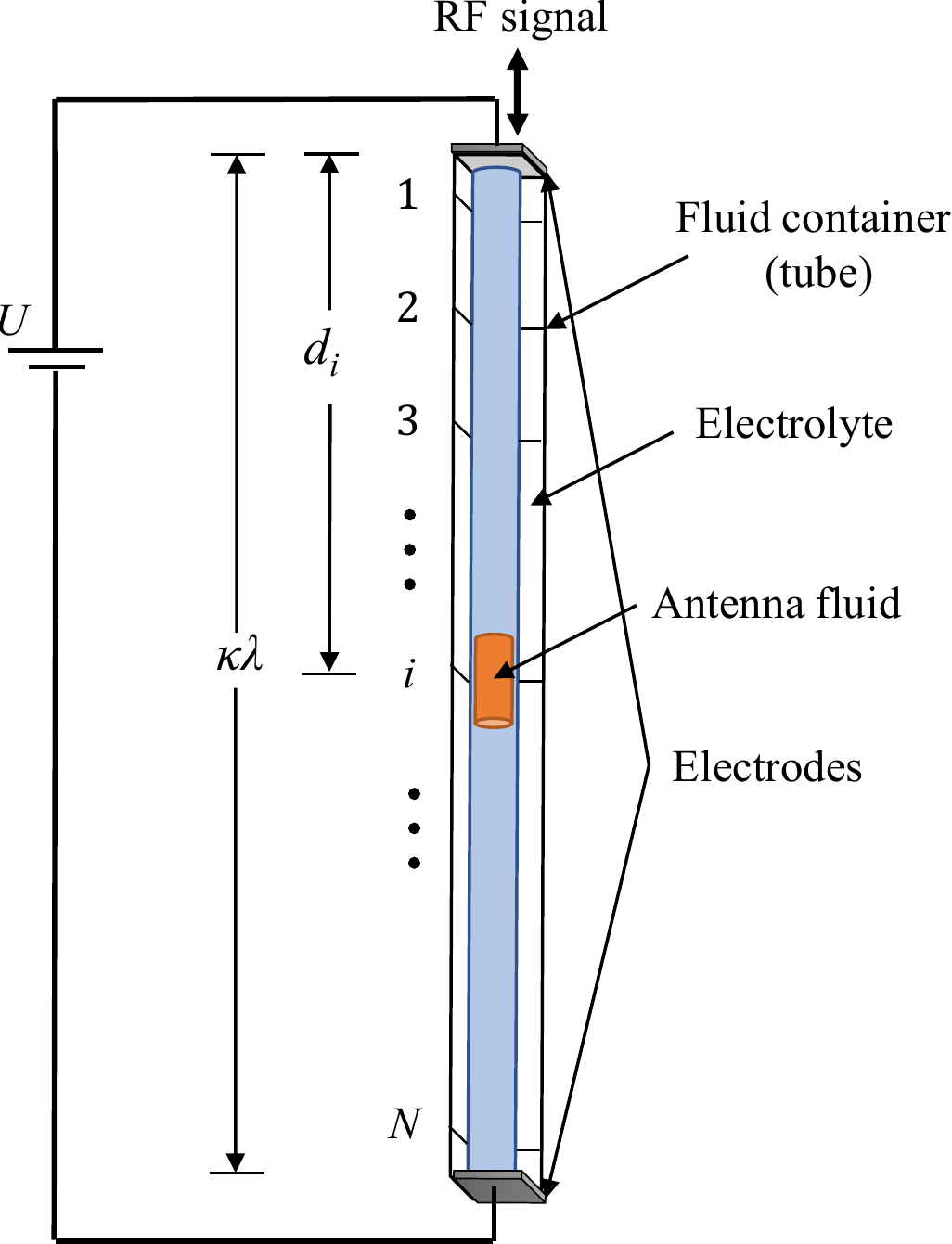}
	\caption{A potential FA architecture.}\label{FA}\vspace{-20pt}
\end{figure}

\subsection{Channel model}
We assume that all wireless signals experience both large-scale path-loss effects and small-scale fading. More specifically, the large-scale attenuation incurred at the transmitted signals follows an unbounded singular path-loss model i.e., $\ell(r)=r^{-a}$, which assumes that the received power decays with the distance $r$ between the transmitter located at $X$ and the receiver located at $Y$ i.e., $r = \|X-Y\|$, where $a > 2$ denotes the path-loss exponent. 
%For simplicity, we denote as $\ell_0(r)$ the path-loss between the transmitter located at $X$ and the UE located at the origin.
For the small-scale fading, a block fading channel model is considered. Particularly,  we assume that the channel remains constant during a coherence time $T_c$, also known as \textit{channel coherence interval}, and evolves independently from block to block. Specifically, we assume that the small-scale fading between two nodes is modeled as a circularly symmetric complex Gaussian distribution with zero mean and variance of $\sigma^2$, and thus, the channel's amplitude between the $i$-th port of the typical UE and its serving BS, $\left| g_{{\rm 0}i}\right| $, is Rayleigh distributed\footnote{While such an assumption clearly does not reflect a real network environment, it still enables us to obtain some closed-form expressions which may be used as estimates for more realistic situations.}. Owing to the capability of ports being arbitrarily close to each other, the observed channels are considered to be correlated \cite{LIAN}. In particular, the channels observed by the $N$ ports of the typical UE, can be evaluated as
\begin{equation*}
g_{{\rm 0}i}=\begin{cases}
\sigma \alpha_1+j\sigma \beta_1 &\text{if $i=1$},\\
\!\sigma\!\left(\!\sqrt{\!1\!-\!\mu_i^2}\alpha_i\!+\!\mu_i \alpha_0\!\right)\!\!+\!j\sigma\!\left(\!\sqrt{1\!-\!\mu_i^2}\beta_i\!+\!\mu_i \beta_0\!\right) &\text{otherwise},\\
\end{cases}
\end{equation*}
where $i\in\mathcal{N}$, $\alpha_1,\dots,\alpha_N,\beta_1,\dots,\beta_N$ are all independent Gaussian random variables with zero mean and variance of $\frac{1}{2}$; $\mu_i$ is the autocorrelation parameter that can be chosen appropriately to determine the channel correlation between the $i$-th and the reference (i.e., first) port of a FA. Specifically, we assume that the autocorrelation parameter is given by \cite{WON1}
	\begin{equation*}
	\mu_i = \begin{cases}
	0&\text{if $i=1$},\\
	J_0\left(\frac{2\pi(i-1)}{N-1}\kappa\right) &\text{otherwise},\\
	\end{cases}
	\end{equation*}
	where $J_0(\cdot)$ is the zero-th Bessel function of the first kind \cite{WON2}. 

Regarding the LI, we assume that all FD receivers (i.e., UE or BS) employ imperfect cancellation mechanisms \cite{ALK}. Such cancellation mechanisms rely heavily on the accuracy of the LI links' CE \cite{ALK}. As such, we consider the residual LI channel coefficient at the BSs and the UEs to follow a Nagakami-$\mu$ distribution with parameters $\left(\mu,\sigma^2_{e_{\rm LI}^b}\right)$ and $\left(\mu,\sigma^2_{e_{\rm LI}^u}\right)$, respectively. Note that, $\sigma^2_{e_{\rm LI}^b}$ and $\sigma^2_{e_{\rm LI}^u}$ depict the quality of the CE for the LI links at the BSs and the UEs, respectively. Therefore, the power gain of the residual LI channel at the BSs and the UEs follows a Gamma distribution with mean $\mu$ and variance $\sigma^2_{e_{\rm LI}^b}/\mu$ and $\sigma^2_{e_{\rm LI}^u}/\mu$, respectively i.e., $|\hat{h}^b_{\rm LI}|^2\sim\Gamma\left(\mu,\frac{\sigma^2_{e_{\rm LI}^b}}{\mu}\right)$ and $|\hat{h}^u_{\rm LI}|^2\sim\Gamma\left(\mu,\frac{\sigma^2_{e_{\rm LI}^u}}{\mu}\right)$. 
\subsection{Power allocation}
Due to the scarce power resources of battery-powered devices, the UL power control is of paramount importance in next-generation cellular networks \cite{SES}. Hence, the employment of a power control scheme only for the UL transmission is considered, while for the DL transmission we assume a fixed power transmission allocation scheme. In particular, all UEs utilize a distance-proportional fractional power control in order to compensate the large-scale path-loss and maintain the average received signal power at their corresponding serving BSs equal to $\omega$ \cite{SAK}. To achieve this, a UE located at a distance $R$ from its serving BS, and thus, with path-loss equal to $R^{-a}$, adapts its transmitted signal power to $\omega R^{a \epsilon}$, where $0\leq\epsilon\leq 1$ is the power control fraction. According to the general UL power control mechanism used in the Long-Term-Evolution (LTE) standard \cite{SES}, the transmission power allocated to a cellular UE can be expressed as $P_u(R)=\min\{\omega R^{a \epsilon},\ P_{\rm m}\}$, where the UEs which are unable to fully invert the path-loss, transmit with maximum power $P_{\rm m}$. Note that, for the case where $\epsilon\!=\!1$, the path-loss is completely compensated if $P_{\rm m}$ is sufficiently large, and if $\epsilon=0$, no path-loss inversion is performed and all the UEs transmit with the same power. For the DL transmissions, we consider a fixed power transmission allocation scheme i.e., BSs transmit with power $P$.

\section{Channel Estimation and Data Communication Under a Limited Coherence Interval}\label{Preliminaries}
In this section, we initially evaluate the statistic properties of the aggregate interference at the typical FD receiver. Moreover, we elaborate in detail the CE and data transmission periods in the context of the adopted LMMSE-based CE technique, as well as the considered imperfect SI cancellation scheme. We finally provide analytical expressions which will be useful for computing the outage and average sum-rate performance in Section \ref{PerformanceAnalysis}.

\subsection{Interference Characterization}\label{MUInterf}
In this section, we examine the received interference at the typical receiver (i.e., UE or BS), where analytical expression for the mean aggregate interference is derived. Note that, due to the existence of both DL and UL transmissions, the aggregate interference at the typical receiver is induced by both the BSs and the UEs. Initially, the interference caused by the BSs and observed at the $i$-th port of the typical UE and the tagged BS, can be expressed as follows
\begin{equation}\label{BS-UE}
	\left(\mathcal{I}_i^{{\rm DL}}\right)_{\rm BS}=P\sum\nolimits_{\substack{t\in\mathbb{N}^+ \\ x_t\in\Phi\backslash x_0}} \ell(r_{i}(\|x_t\|))|g_{ti}|^2\text{ and } \left(\mathcal{I}_i^{\rm UL}\right)_{\rm BS}=P\sum\nolimits_{\substack{t\in\mathbb{N}^+ \\ x_t\in\Phi}} \ell(r_{i}(\|x_t\|))|g_{ti}|^2,
\end{equation}
respectively. In addition, the interference induced by the active UEs and observed at the $i$-th port of the typical UE and the tagged BS, is given by
\begin{equation}\label{UE-UE}
	\left(\mathcal{I}_i^{{\rm DL}}\right)_{\rm UE}=\sum\nolimits_{\substack{t\in\mathbb{N}^+ \\ y_t\in\Psi}} P_u(\|\overline{y_t}\|)\ell(r_{i}(\|y_t\|))|h_{ti}|^2,
\end{equation}
and
\begin{equation}\label{UE-BS}
	\left(\mathcal{I}_i^{\rm UL}\right)_{\rm UE}=\sum\nolimits_{\substack{t\in\mathbb{N}^+ \\ y_t\in\Psi\backslash x_0}} P_u(\|\overline{y_t}\|)\ell(r_{i}(\|y_t\|))|h_{ti}|^2,
\end{equation}
respectively, where $P_u(\|\overline{y}\|)=\min\{\omega \|\overline{y}\|^{a \epsilon},P_{\rm m}\}$ represents the transmit power of the interfering UE to its serving UL BS located at $\overline{y}$; $h_{ti}$ is the channel fading between the typical receiver and the $i$-th port of the interfering UE at $y_t$, and $\Psi$ is the point process that represents the active UEs. Therefore, the aggregate interference observed for the DL and the UL transmissions are given by
\begin{equation}\label{OverallDL}
	\mathcal{I}_i^{{\rm DL}} = \left(\mathcal{I}_i^{{\rm DL}}\right)_{\rm BS}+\left(\mathcal{I}_i^{{\rm DL}}\right)_{\rm UE}
\end{equation}
and 
\begin{equation}\label{OverallUL}
	\mathcal{I}_i^{\rm UL} = \left(\mathcal{I}_i^{\rm UL}\right)_{\rm BS}+\left(\mathcal{I}_i^{\rm UL}\right)_{\rm UE},
\end{equation}
respectively.

With the aim of defining the received interference for the DL/UL transmission, the characterization of interference-free areas is essential. In particular, the interference-free area of each typical receiver (i.e, BS or UE) is quantified by defining the spatial density of the interfering nodes $f(r)$.
\begin{itemize}
	\item BSs-to-UEs interference: According to the adopted association criterion, all BSs that cause interference to the typical UE exhibit greater path-loss compared to the path-loss of the tagged BS. Thus, the locations of the interfering BSs must satisfy the condition $\|x\|>\|x_0\|$, where $x\in\Phi$ and $x_0\in\Phi$. Hence, $f_1(r) = \lambda_b\mathds{1}\left\lbrace r > \|x_0\|\right\rbrace$.
	\item UEs-to-BSs interference: There is no exact boundary for the interference-free area around the tagged BS, since the interfering UEs can be arbitrarily close. However, based on the adopted criterion, we can conclude that for a UE which is located at $y$ and is served by a BS at $x\in\Phi$, $\|x_0-y\|\!<\!\|x-y\|$. Hence, $f_2(r)\!=\!1\lambda_b\mathds{1}\left\lbrace r \!>\! R\right\rbrace$, where $R=\|x_0-y\|$.
	\item BSs-to-BSs interference: Owing to the PPP assumption, the BSs can be very close to each other. In real-world deployments, however, this is not true due to physical constraints, different antenna heights, etc. Based on this, we adopt the approximation proposed in \cite{SAK}, where the locations of the interfering BSs follow a non-homogeneous point process with $f_3(r)=\lambda_b\left(1-\exp\left(-\pi \lambda_b r^2\right)\right)\mathds{1}\left\lbrace r > b_b\right\rbrace$, where $b_b$ is a constraint on the length of an BS-to-UE interference link.
	\item UEs-to-UEs interference: We adopt a similar approximation to model the interfering UEs. In particular, the point process $\Psi$ is modelled as an inhomogeneous PPP with density $f_4(r) = \lambda_b\left(1-\exp\left(-\pi \lambda_b r^2\right)\right)\mathds{1}\left\lbrace r > b_u\right\rbrace$, where $b_u$ is a constraint on the length of an UE-to-UE interference link.
\end{itemize}  
To characterize the network interference, in the following Lemmas, we compute the mean of the random variables $I^{\rm DL}_i$ and $I^{\rm UL}$.

\begin{lemma}\label{MeanInterferenceUL}
	The mean of $\left(I^{\rm UL}\right)_{\rm BS}$, conditioned on the distance from the serving transmitter i.e., $\rho$, is given by
	\begin{equation}\label{MeanULBS}
		\mathbb{E}\!\left(\left(\mathcal{I}_i^{\rm UL}\right)_{\rm BS}|\rho\right)\!=\!\frac{\pi\sigma^2b_b^{2-a}}{a-2}\!P\lambda_b\!\left((a\!-\!2)E_{\frac{a}{2}}[\pi\lambda_b b_b^2]\!+\!2\right),
	\end{equation}
	and the mean of $\left(I^{\rm UL}\right)_{\rm UE}$, conditioned on $\rho$, is given by
	\begin{equation}\label{MeanULUE}
		\mathbb{E}\!\left(\left(\mathcal{I}_i^{\rm UL}\right)_{\rm UE}|\rho\right)\!=\!\frac{2\pi^\frac{a}{2}}{a\!-\!2}\!P\lambda_b^\frac{a}{2}\!\Bigg(\!P_{\rm m}\Gamma\!\left[2\!-\!\frac{a}{2},\pi\lambda_b\!\left(\!\frac{P_{\rm m}}{\omega}\!\right)^\frac{2}{a\epsilon}\right]-\pi^{-\frac{a\epsilon}{2}}\omega\lambda_b^{1-\epsilon}\Delta\Gamma\left(2-\frac{a}{2}(\epsilon-1)\right)\Bigg),
	\end{equation}
	where $\Delta\Gamma(x) = \Gamma\left[x,\pi\lambda_b\left(\frac{P_{\rm m}}{\omega}\right)^\frac{2}{a\epsilon}\right]-\Gamma[x]$ and $E_\alpha[\beta]$ is the exponential integral function \cite{Marcumbook}. Thus, the mean of random variable $\mathcal{I}_i^{\rm UL}$, condition on $\rho$, is given by
	\begin{equation*}
		\mathbb{E}\left(\mathcal{I}_i^{\rm UL}|\rho\right) = \mathbb{E}\left(\left(\mathcal{I}_i^{\rm UL}\right)_{\rm BS}|\rho\right)+\mathbb{E}\left(\left(\mathcal{I}_i^{\rm UL}\right)_{\rm UE}|\rho\right).
	\end{equation*}
\end{lemma}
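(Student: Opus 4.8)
The plan is to treat the two components in \eqref{OverallUL} separately and, in each case, first average over the small-scale fading (which is independent of the point process and of the link distances) and only then apply the Campbell averaging formula to the remaining functional of the interfering point process. Throughout I adopt the macroscopic viewpoint that underpins the whole paper: since every interferer lies at a distance orders of magnitude larger than the FA aperture $\kappa\lambda$, the port offset in \eqref{Distance} is negligible, so I replace $r_i(\|x_t\|)$ by $\|x_t\|$ and likewise $r_i(\|y_t\|)$ by $\|y_t\|$. This is exactly why the resulting means carry no dependence on the port index $i$, on $\kappa$, or on $\rho$.

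For the BS-to-BS term $(\mathcal{I}_i^{\rm UL})_{\rm BS}$ in \eqref{BS-UE}, I would use $\mathbb{E}[|g_{ti}|^2]=\sigma^2$ from the channel model and pull it out of the sum. Applying Campbell's theorem with the interfering-BS intensity $f_3(r)=\lambda_b(1-e^{-\pi\lambda_b r^2})\mathds{1}\{r>b_b\}$ and passing to polar coordinates gives $\mathbb{E}[(\mathcal{I}_i^{\rm UL})_{\rm BS}|\rho]=2\pi\sigma^2 P\lambda_b\int_{b_b}^{\infty} r^{1-a}\bigl(1-e^{-\pi\lambda_b r^2}\bigr)\diff r$. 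I would split this into the pure power-law integral $\int_{b_b}^{\infty} r^{1-a}\diff r=b_b^{2-a}/(a-2)$, which produces the constant term, and the Gaussian-weighted integral $\int_{b_b}^{\infty} r^{1-a}e^{-\pi\lambda_b r^2}\diff r$. The latter is handled by the substitution $u=\pi\lambda_b r^2$, which turns it into an upper incomplete Gamma function $\Gamma[1-\tfrac{a}{2},\pi\lambda_b b_b^2]$; the identity $E_\alpha[\beta]=\beta^{\alpha-1}\Gamma[1-\alpha,\beta]$ with $\alpha=\tfrac{a}{2}$ then rewrites it as $E_{a/2}[\pi\lambda_b b_b^2]$, and collecting constants yields \eqref{MeanULBS}.

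The UE-to-BS term $(\mathcal{I}_i^{\rm UL})_{\rm UE}$ in \eqref{UE-BS} is where the real work lies, because the transmit power $P_u(\|\overline{y_t}\|)=\min\{\omega\|\overline{y_t}\|^{a\epsilon},P_{\rm m}\}$ of each interferer is coupled to that interferer's own serving distance, which is itself random. My plan is to average over this serving distance $R$ using the nearest-BS contact pdf \eqref{ContactPDF}, while observing that the association rule forces the interfering UE's distance to the tagged BS to exceed its serving distance, so the spatial integral over the tagged-BS distance runs from $R$ to $\infty$ with intensity $f_2$. After the inner spatial integral $\int_R^{\infty} 2\pi\lambda_b r^{1-a}\diff r = 2\pi\lambda_b R^{2-a}/(a-2)$, I am left with $\mathbb{E}_R[P_u(R)R^{2-a}]$ against the Rayleigh density $2\pi\lambda_b R e^{-\pi\lambda_b R^2}$. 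The $\min$ then forces a split of the $R$-integral at the power-cap crossover $R^\star=(P_{\rm m}/\omega)^{1/(a\epsilon)}$: the region $R>R^\star$ (UEs transmitting at $P_{\rm m}$) contributes the upper incomplete Gamma term $P_{\rm m}\Gamma[2-\tfrac{a}{2},\pi\lambda_b(P_{\rm m}/\omega)^{2/(a\epsilon)}]$, while the region $R<R^\star$ (UEs inverting the path loss) contributes a lower incomplete Gamma term that, after the same $u=\pi\lambda_b R^2$ substitution, assembles into the $\Delta\Gamma(\cdot)$ factor of \eqref{MeanULUE}.

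The main obstacle is precisely this last computation: keeping the two coupled distances — the interferer-to-tagged-BS distance that sets the received path loss, and the interferer-to-its-own-BS distance that sets the transmit power — correctly linked through the association constraint, and then performing the crossover split of the $\min$ cleanly so that both incomplete Gamma functions emerge with the right arguments and prefactors. Once both components are established, the stated total $\mathbb{E}(\mathcal{I}_i^{\rm UL}|\rho)$ follows at once by additivity of expectation over \eqref{OverallUL}.
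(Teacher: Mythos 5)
Your proposal is correct in substance but reaches the result by a slightly different route than the paper. The paper's Appendix A first computes the Laplace transform $\mathcal{L}_{(\mathcal{I}^{\rm UL})_{\rm BS}}(s)$ via the probability generating functional of the (inhomogeneous) PPP of interferers, invoking the same far-field approximation $r_i(\|x_t\|)\approxeq\|x_t\|$ that you use, and then extracts the mean as $-\tfrac{\diff}{\diff s}\mathcal{L}(s)\big|_{s=0}$; you instead average out the fading via $\mathbb{E}[|g_{ti}|^2]=\sigma^2$ and apply Campbell's first-moment formula directly. The two are mathematically equivalent and lead to the identical radial integral, so nothing is lost for this lemma; what the paper's detour through the Laplace functional buys is the second derivative at $s=0$, i.e.\ the variance of the interference, which is needed later for the Gamma-matching in Proposition \ref{InterferenceDistribution} and which your mean-only computation would have to redo separately. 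For the UE-to-BS term your treatment is actually more explicit than the paper's (which dismisses it as ``a similar methodology''): the averaging of $P_u(R)R^{2-a}$ over the contact distribution \eqref{ContactPDF} with the split of the $\min$ at $R^\star=(P_{\rm m}/\omega)^{1/(a\epsilon)}$ is exactly the computation that produces the two incomplete-Gamma terms in \eqref{MeanULUE}, and you correctly identify the association constraint $\|x_0-y\|>\|x-y\|$ that fixes the lower limit of the spatial integral. One point to double-check: splitting $1-e^{-\pi\lambda_b r^2}$ in the $f_3$-weighted integral gives the exponential-integral contribution with a \emph{negative} sign, i.e.\ $2-(a-2)E_{a/2}[\pi\lambda_b b_b^2]$ rather than the $+$ appearing in \eqref{MeanULBS}; reconcile this against the paper's appendix, which itself integrates from $r_i(\rho)$ with plain density $\lambda_b$ rather than from $b_b$ with $f_3$, so the discrepancy appears to lie in the paper's bookkeeping rather than in your method.
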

\begin{proof}
	See Appendix \ref{Appendix4}.
\end{proof}
\begin{lemma}\label{MeanInterferenceDL}
	The mean $\left(I^{\rm DL}_i\right)_{\rm BS}$, conditioned on $\rho$, is given by
	\begin{equation}\label{MeanDLBS}
		\mathbb{E}\left(\left(\mathcal{I}_i^{\rm DL}\right)_{\rm BS}|\rho\right)=\frac{2\pi\sigma^2}{a-2}P\lambda_b r_{i}^{2-a}(\rho),
	\end{equation}
	and the mean of $\left(I^{\rm DL}_i\right)_{\rm UE}$, conditioned on $\rho$, is given by
	\begin{align}\label{MeanDLUE}
		\mathbb{E}\left(\left(\mathcal{I}_i^{\rm DL}\right)_{\rm UE}|\rho\right)&\!=\!\frac{\pi \sigma^2b_u^{2-a}}{a-2}\!P\lambda_b\!\left((a\!-\!2)E_{\frac{a}{2}}[\pi\lambda_b b_u^2]\!-\!2\right)\nonumber\\
		&\qquad\qquad\times\!\left(\!\frac{\omega\Delta\Gamma\left(1+\frac{a\epsilon}{2}\right)}{(\pi\lambda_b)^{\frac{a\epsilon}{2}}}\!-\!P_{\rm m}\exp\!\left(\!-\pi\lambda_b \!\left(\!\frac{P_{\rm m}}{\omega}\!\right)^{\frac{2}{a\epsilon}}\!\right)\right),
	\end{align}
	where $\Delta\Gamma(x) = \Gamma\left[x,\pi\lambda_b\left(\frac{P_{\rm m}}{\omega}\right)^\frac{2}{a\epsilon}\right]-\Gamma[x]$ and $E_\alpha[\beta]$ is the exponential integral function. Thus, the mean of random variable $\mathcal{I}_i^{\rm DL}$, condition on $\rho$, is given by
	\begin{equation*}
		\mathbb{E}\left(\mathcal{I}_i^{\rm DL}|\rho\right) = \mathbb{E}\left(\left(\mathcal{I}_i^{\rm DL}\right)_{\rm BS}|\rho\right)+\mathbb{E}\left(\left(\mathcal{I}_i^{\rm DL}\right)_{\rm UE}|\rho\right).
	\end{equation*}
\end{lemma}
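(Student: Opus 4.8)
The plan is to treat both downlink interference terms in \eqref{MeanDLBS}--\eqref{MeanDLUE} as linear functionals of the underlying point processes and to evaluate their means via the Campbell (mean-value) theorem, after first averaging out the small-scale fading; this mirrors the uplink computation of Lemma \ref{MeanInterferenceUL}. Since the coefficients $g_{ti}$ and $h_{ti}$ are independent of the node locations and of one another, with $\mathbb{E}[|g_{ti}|^2]=\mathbb{E}[|h_{ti}|^2]=\sigma^2$, each sum in \eqref{BS-UE} and \eqref{UE-UE} collapses to a deterministic integral of the path-loss kernel $r_i(r)^{-a}$ against the intensity $f_j(r)$ of the relevant interferer set, written in polar form as $2\pi r\,f_j(r)\diff r$. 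For the BS term the reduced Palm measure coincides with $\Phi$ by Slivnyak's theorem, so only the association constraint $r>\rho$ survives through $f_1$, whereas for the UE term the active transmitters are modelled by the inhomogeneous intensity $f_4$.

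For the BS term I would write
\begin{equation*}
\mathbb{E}\!\left(\left(\mathcal{I}_i^{\rm DL}\right)_{\rm BS}|\rho\right)=2\pi P\sigma^2\lambda_b\int_{\rho}^{\infty}r_i(r)^{-a}\,r\,\diff r,
\end{equation*}
and then exploit the fact that the port offset enters only through $r_i(r)=\sqrt{r^2+c_i^2}$ with $c_i^2=\tfrac{\kappa^2\lambda^2}{4}\big(\tfrac{N-2i+1}{N-1}\big)^2$. The substitution $u=r_i(r)$, for which $u\,\diff u=r\,\diff r$, maps the lower limit $r=\rho$ to $u=r_i(\rho)$ and turns the integrand into $u^{1-a}$; the integral therefore equals $r_i(\rho)^{2-a}/(a-2)$ \emph{exactly}, with no need to neglect the displacement, and \eqref{MeanDLBS} follows at once.

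The UE term is the harder part, and the main obstacle is that the summand couples two distances: the distance $\|y_t\|$ from an interfering UE to the typical receiver, which governs the path loss, and the distance $\|\overline{y_t}\|$ from that same UE to its own serving BS, which governs the fractional-power-control level $P_u(\|\overline{y_t}\|)=\min\{\omega\|\overline{y_t}\|^{a\epsilon},P_{\rm m}\}$. My plan is to treat these two distances as independent, so that the mean factorises into a spatial integral against $f_4$ and an average transmit power $\mathbb{E}[P_u(R)]$ taken over the contact distribution $f_R$ of \eqref{ContactPDF}. For the power-control factor I would split the expectation at the cut-off $R_0=(P_{\rm m}/\omega)^{1/(a\epsilon)}$: the region $R<R_0$ contributes $\omega\int_0^{R_0}r^{a\epsilon}f_R(r)\diff r$, which under $s=\pi\lambda_b r^2$ becomes a lower incomplete gamma and is repackaged through $\Delta\Gamma(\cdot)$ as $\omega\Delta\Gamma(1+\tfrac{a\epsilon}{2})/(\pi\lambda_b)^{a\epsilon/2}$, while the saturated region $R>R_0$ contributes $P_{\rm m}\,\mathbb{P}[R>R_0]=P_{\rm m}\exp(-\pi\lambda_b(P_{\rm m}/\omega)^{2/(a\epsilon)})$. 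For the spatial factor I would take $r_i(r)\approx r$ (the offset $c_i$ being negligible against $b_u$) and split $\int_{b_u}^{\infty}r^{1-a}(1-e^{-\pi\lambda_b r^2})\diff r$ into an elementary power-law piece giving $b_u^{2-a}/(a-2)$ and a Gaussian-weighted piece that, via $t=\pi\lambda_b r^2$ and the identity $E_{n}(z)=z^{n-1}\Gamma(1-n,z)$, reduces to $\tfrac12 b_u^{2-a}E_{a/2}(\pi\lambda_b b_u^2)$. Multiplying the two factors and tracking the signs carried by the $\Delta\Gamma$ and $E_{a/2}$ representations reproduces \eqref{MeanDLUE}, and the expression for $\mathbb{E}(\mathcal{I}_i^{\rm DL}|\rho)$ then follows by linearity of expectation. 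I expect the delicate points to be justifying the distance-independence factorisation and correctly matching the incomplete-gamma and exponential-integral forms, rather than any of the elementary integrations.
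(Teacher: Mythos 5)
Your proposal is correct and follows essentially the same route as the paper: the paper computes the mean by differentiating the PGFL-based Laplace transform of the shot-noise interference at $s=0$, which is exactly the Campbell mean-value formula you invoke directly, and your factorisation of the UE term into a spatial integral against $f_4$ and an average transmit power over the contact distribution (with the same independence approximation between $\|y_t\|$ and $\|\overline{y_t}\|$) is what the paper implicitly does. Your substitution $u=r_i(r)$ for the BS term is a minor refinement, since it yields $r_i(\rho)^{2-a}/(a-2)$ exactly rather than via the paper's approximation $r_i(\|x_t\|)\approxeq\|x_t\|$.
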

\begin{proof}
	The proof follows similar arguments with the proof for the mean of $\mathbb{E}\!\left(\left(\mathcal{I}_i^{\rm UL}\right)_{\rm BS}|\rho\right)$ and $\mathbb{E}\!\left(\left(\mathcal{I}_i^{\rm UL}\right)_{\rm UE}|\rho\right)$ (see Appendix \ref{Appendix4}).
\end{proof}

Regarding the LI, let $\mathcal{I}_{\rm LI}^{\rm DL}$ and $\mathcal{I}_{\rm LI}^{\rm UL}$ denote the residual interference at the FD UEs and the FD BSs, respectively, after the imperfect LI cancellation \cite{SIL2}. Since the residual LI incurred at a given receiver depends on its own transmit power, we define the residual LI power as follows
\begin{equation}\label{LI}
	\mathcal{I}_{\rm LI}^{\rm DL}\! =\! P_u(r_i\!\left(\|x_{0}\|\right))|\hat{h}^u_{\rm LI}|^2\ \text{and}\  \mathcal{I}_{\rm LI}^{\rm UL}\!=\!P|\hat{h}^b_{\rm LI}|^2,
\end{equation}
where $\hat{h}_{\rm LI}^b$ and $\hat{h}_{\rm LI}^u$ represent the estimate of the residual LI channel coefficient at the BSs and the UEs, following a Nakagami-$\mu$ distribution with parameters $\left(\mu,\sigma^2_{e_{\rm LI}^b}\right)$ and $\left(\mu,\sigma^2_{e_{\rm LI}^u}\right)$, respectively. An explicit expression for the variance of the CE error for the LI links at the BSs and the UEs i.e., $\sigma^2_{e_{\rm LI}^b}$ and $\sigma^2_{e_{\rm LI}^u}$, are given in Section \ref{VarianceCELI}.
.

\subsection{Channel Estimation}
\begin{figure}
	\centering\includegraphics[width=0.55\linewidth]{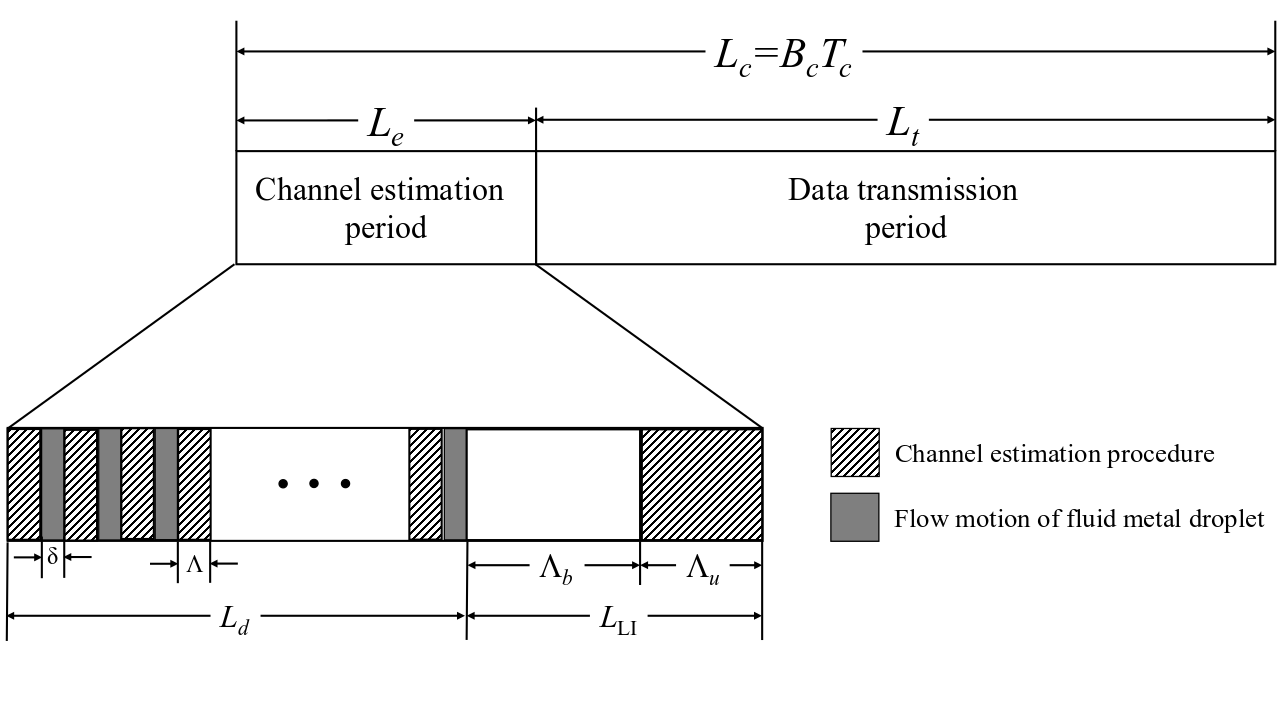}
	\caption{Representation of a block fading channel consisting of channel estimation and data transmission.}\label{Time}\vspace{-20pt}
\end{figure}
In this section, we introduce the LMMSE-based CE scheme adopted for the CE of both the direct and the LI links, for the considered limited coherence interval scenario. More specifically, we consider a low-complexity LMMSE-based CE technique, where the CE between the UEs' ports and their serving BSs, denoted as \textit{direct links}, as well as between the transmit and the receive antenna of a FD transmitter (i.e., either BS or UE), denoted as \textit{LI links}, is performed via pilot-training symbols in a sequential manner. {Due to the fact that the propagation channels are modeled by a linear system and each channel response follows a circularly symmetric complex Gaussian distribution, the employment of such CE scheme enables the modeling and analysis of the considered network deployments with low complexity.} Hence, in accordance with the adopted block fading model, the length of each coherence interval/block ($L_c$ channel uses) is equal to the product of the coherence bandwidth $B_c$ in Hz and the coherence time $T_c$ in sec \cite{Tsebook}. As illustrated in Fig. \ref{Time}, each coherence block is divided into two sub-blocks for CE and data communication. In particular, we assume that $L_e$ channel uses per block are assigned for pilot-training symbols to enable CE, of which $L_d$ and $L_{\rm LI}$ channel uses are allocated for the CE of the direct and the LI links, respectively i.e., $L_e=L_d+L_{\rm LI}$. The rest of the channel uses i.e., $L_t = L_c-L_e$, are used for data transmission. Without loss of generality, we assume the forward and reverse channels between a BS and a UE are reciprocal.
\subsubsection{Direct Links}
Based on the adopted FA model, described in Section \ref{FAmodel}, a fraction of the channel uses allocated for the first phase (i.e., CE) is dedicated to the flow motion of the fluid metal between the various ports of a FA, also known as \textit{switching channel uses} and denoted as $l_s$. This captures the finite velocity of a fluid metal within the electrolyte-filled capillary of a FA, contrary to existing works that assume high velocity fluid metals that leads to instantaneous port switching i.e., $l_s=0$. In the context of the adopted sequential CE technique, the fluid metal of a FA performs $N-1$ transitions of duration $\delta$ between adjacent ports. Hence, the required switching channel uses are equal to $l_s=(N-1)\delta B_c$. The key idea of the adopted CE process is to divide the remaining CE period i.e., $L_d-l_s$, into $N$ symmetric segments of $\Lambda=\frac{L_d-l_s}{N}$ consecutive symbols (see Fig. \ref{Time}). During each segment, the channel between a single FA port and the serving BS is estimated\footnote{More sophisticated vector-based or compressive sensing-based CE schemes can be used but this estimation process is sufficient for the purpose of this work.}. Hence, the baseband equivalent received pilot signal at the typical UE's $i$-th port, is given by
\begin{equation*}
\mathbf{y}_i = \sqrt{\Lambda P\ell\left(r_{i}(\rho)\right)}g_{0i} \mathbf{X}_0+\sum\nolimits_{\substack{t\in\mathbb{N}^+ \\ x_t\in\Phi\backslash x_0}}\sqrt{P\ell\left(r_i\left(\|x_t\|\right)\right)}g_{ti}\mathbf{X}_t+\eta_0,
\end{equation*}
where $\mathbf{X}_0$ is a deterministic $\Lambda\times 1$ training symbol vector \cite{KAY}, $\mathbf{X}_t\stackrel{d}{\sim}\mathcal{CN}\left(\mathbf{0}_{\Lambda\times 1},\mathbf{I}_{\Lambda}\right)$ is a pilot symbol vector from the interfering BS at $x_t\in\Phi\backslash x_0$, with channel $g_{ti}\stackrel{d}{\sim}\mathcal{CN}(0,1)$, and $\eta_0\stackrel{d}{\sim}\mathcal{CN}\left(\mathbf{0}_{\Lambda\times 1},N_0\mathbf{I}_{\Lambda}\right)$ is the additive white Gaussian noise (AWGN) vector. 

We consider the low-complexity LMMSE estimator, which is optimal among the class of linear estimators. Thus, according to the adopted estimator, the estimate of $g_{0i}$ conditioned on $\rho$, is given by 
\begin{equation}\label{VarianceProof}
\hat{g}_{0i}|_{\rho}=\frac{\sqrt{\Lambda P\ell\left(r_{i}(\rho)\right)}\sigma^2}{\Lambda P\ell\left(r_{i}(\rho)\right)\sigma^2+N_0+\mathbb{E}\left(\left(\mathcal{I}_i^{\rm DL}\right)_{\rm BS}|\rho\right)}\widetilde{y}_i,
\end{equation}
where $\widetilde{y}_i$ is the observation scalar signal at the $i$-th port of the typical UE i.e., $\widetilde{y}_i=\mathbf{X}_0^\dagger\mathbf{y}_i$, and $\mathbb{E}\left(\left(\mathcal{I}_i^{\rm DL}\right)_{\rm BS}|\rho\right)$ denotes the conditional mean interference, that is caused by the interfering BSs and observed at the $i$-th port of the typical UE, that is given by \eqref{BS-UE}.

The CE error can then be derived as $e_{0i} = g_{0i}-\hat{g}_{0i}|\rho$, where $e_{0i}\stackrel{d}{\sim}\mathcal{CN}\left(0,\sigma^2_{e_i}\right)$, and $\hat{g}_{0i}|\rho$ and $e_{0i}$ are uncorrelated \cite{KAY}. The following lemma evaluates the variance of the CE error, given that the distance between the typical UE and its serving BS is $\rho$.
\begin{lemma}\label{Proposition1}
	The variance of the CE error for the link between the $i$-th port of the typical UE and the tagged BS at $x_0\in\Phi$, conditioned on $\rho$, is given by
	\begin{equation}\label{NoiseVarianceEq}
	\sigma^2_{e_i}\!=\!\left(\!1\!+\!\frac{\Lambda P\ell\left(r_{i}(\rho)\right)}{N_0\!+\!\frac{2\pi\sigma^2}{a-2}\!P\lambda_br_{i}^{2-a}(\rho)}\!\right)^{-1}.
	\end{equation}
where $\Lambda=\frac{L_d-l_s}{N}$.
\end{lemma}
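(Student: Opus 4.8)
The plan is to recognize that the estimator in \eqref{VarianceProof} is precisely the scalar LMMSE estimator for a Gaussian linear observation model, and then to read off its minimum mean-squared error via the orthogonality principle. First I would form the matched-filter output $\widetilde{y}_i=\mathbf{X}_0^\dagger\mathbf{y}_i$ that already appears in \eqref{VarianceProof}. Using the unit-energy normalization $\mathbf{X}_0^\dagger\mathbf{X}_0=1$ of the deterministic training vector, this collapses the $\Lambda\times 1$ observation into the scalar model
\begin{equation*}
\widetilde{y}_i=\sqrt{\Lambda P\ell(r_{i}(\rho))}\,g_{0i}+w_i,\qquad w_i=\sum\nolimits_{\substack{t\in\mathbb{N}^+\\ x_t\in\Phi\backslash x_0}}\sqrt{P\ell(r_i(\|x_t\|))}\,g_{ti}\,\mathbf{X}_0^\dagger\mathbf{X}_t+\mathbf{X}_0^\dagger\eta_0,
\end{equation*}
in which the desired channel $g_{0i}$ is zero-mean with unit variance (as in the pilot model) and is independent of the effective disturbance $w_i$.

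Next I would show that $w_i$ is zero-mean and compute its conditional variance. Because the interfering pilots satisfy $\mathbf{X}_t\stackrel{d}{\sim}\mathcal{CN}(\mathbf{0},\mathbf{I}_\Lambda)$ independently of $\mathbf{X}_0$, $\{g_{ti}\}$ and $\eta_0$, every cross term vanishes in expectation, and the interference-plus-noise covariance of the vector model is a scaled identity; hence projecting onto $\mathbf{X}_0$ is lossless. The projected AWGN contributes $N_0\|\mathbf{X}_0\|^2=N_0$, while $\mathbb{E}|\mathbf{X}_0^\dagger\mathbf{X}_t|^2=\|\mathbf{X}_0\|^2=1$ shows that the projection preserves the per-interferer power, so that $\mathbb{E}(|w_i|^2\mid\rho)=N_0+\mathbb{E}\big((\mathcal{I}_i^{\rm DL})_{\rm BS}\mid\rho\big)$. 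The second term is exactly the conditional mean aggregate interference already evaluated in Lemma \ref{MeanInterferenceDL}, namely $\tfrac{2\pi\sigma^2}{a-2}P\lambda_b r_{i}^{2-a}(\rho)$ from \eqref{MeanDLBS}; replacing the instantaneous pilot-contamination power by this conditional mean is the single modeling approximation that turns $w_i$ into an effective-noise term with a deterministic variance.

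Finally I would apply the standard scalar LMMSE result for $\widetilde{y}_i=h\,g_{0i}+w_i$: by the orthogonality principle the error $e_{0i}=g_{0i}-\hat{g}_{0i}|\rho$ is uncorrelated with $\hat{g}_{0i}|\rho$, and its variance is $\sigma_{e_i}^2=\mathrm{Var}(g_{0i})-|h|^2\mathrm{Var}(g_{0i})^2/(|h|^2\mathrm{Var}(g_{0i})+\mathbb{E}(|w_i|^2\mid\rho))=(1+|h|^2/\mathbb{E}(|w_i|^2\mid\rho))^{-1}$ for the unit-variance channel. Substituting $|h|^2=\Lambda P\ell(r_{i}(\rho))$ together with the disturbance variance from the previous step yields \eqref{NoiseVarianceEq}, with $\Lambda=\frac{L_d-l_s}{N}$ carried along from the pilot allocation. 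The only genuinely delicate point is the second step: one must verify that the aggregate pilot-contamination term, after projection onto $\mathbf{X}_0$, has second-order statistics identical to the data-phase aggregate interference $(\mathcal{I}_i^{\rm DL})_{\rm BS}$, so that Lemma \ref{MeanInterferenceDL} may be reused verbatim; the remainder is textbook scalar-LMMSE algebra.
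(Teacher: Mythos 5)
Your proposal is correct and follows essentially the same route as the paper: both reduce the pilot observation to a scalar Gaussian model with effective disturbance power $N_0+\mathbb{E}\big((\mathcal{I}_i^{\rm DL})_{\rm BS}\,|\,\rho\big)$ (the conditional mean interference from \eqref{MeanDLBS}) and then invoke the orthogonality principle of the LMMSE estimator, i.e., $\sigma^2_{e_i}=\sigma^2-\mathbb{E}\left[|\hat{g}_{0i}|^2\right]$, which is exactly the closed-form scalar MMSE formula you apply. Your write-up is in fact more explicit than the paper's about the two points it leaves implicit --- the unit-norm pilot projection preserving the per-interferer power, and the substitution of the instantaneous pilot contamination by its conditional mean being the sole approximation.
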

\begin{proof}
	See Appendix \ref{Appendix1}.
\end{proof}

\subsubsection{LI Links}\label{VarianceCELI}
During the $L_{\rm LI}$ period of the proposed CE scheme, CE processes for the LI links of both the BSs and the UEs are performed, which will be exploited for the cancellation of the LI signal. A private random training signal, known to the transmitting node only, is transmitted to estimate the respective LI channels by both BSs and UEs \cite{MUR}. Independent time periods have been utilized for transmission of the private training signal by both nodes to avoid the interference from each other, which implies that the BSs remains silent while UEs are transmitting, and vice versa. More specifically, we assume that the $L_{\rm LI}$ period is divided into two segments of $\Lambda_b = w L_{\rm LI}$ and $\Lambda_u = (1-w) L_{\rm LI}$ consecutive symbols for the LI CE at the BS and at the UE of the typical link, respectively, where $w\in(0,1]$. Thus, by employing the low-complexity LMMSE estimator, the estimates of the LI links at both the BSs and the UEs can be achieved, where the variance of their CE error is evaluated in the following Lemma.
\begin{lemma}\label{VarianceLI}
	The variance of the CE error for the LI link at the $i$-th port of the typical UE, conditioned on $\rho$, is given by
	\begin{equation}\label{NoiseVarianceEqLIu}
		\sigma^2_{e_{\rm LI}^u}\!=\!\left(\!1\!+\!\frac{\Lambda_u P_u(r_{i}(\rho))v_{\rm LI}}{N_0\!+\!\mathbb{E}\left((\mathcal{I}_i^{\rm DL})_{\rm UE}|\rho\right)}\!\right)^{-1},
	\end{equation}
and for the LI link at the typical BS, is given by
\begin{equation}\label{NoiseVarianceEqLIb}
	\sigma^2_{e_{\rm LI}^b}\!=\!\left(\!1\!+\!\frac{\Lambda_b Pv_{\rm LI}}{N_0\!+\!\mathbb{E}\left((\mathcal{I}_i^{\rm UL})_{\rm BS}|\rho\right)}\!\right)^{-1}
\end{equation}
	where $\mathbb{E}\left((\mathcal{I}_i^{\rm DL})_{\rm UE}|\rho\right)$ and $\mathbb{E}\left((\mathcal{I}_i^{\rm UL})_{\rm BS}|\rho\right)$ depict the mean of UEs-to-UEs and BSs-to-BSs interference that are given by \eqref{MeanDLUE} and \eqref{MeanULBS}, respectively, $v_{\rm LI}$ is constant representing the path-loss between the transmit and receive antenna of a FD transceiver, $\Lambda_b= w L_{\rm LI}$, and $\Lambda_u = (1-w) L_{\rm LI}$
\end{lemma}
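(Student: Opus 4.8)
The plan is to show that each LI-link estimate obeys exactly the same LMMSE structure already established for the direct links in Lemma~\ref{Proposition1}, so that \eqref{NoiseVarianceEqLIu} and \eqref{NoiseVarianceEqLIb} follow by substituting the LI-specific pilot power and interference-plus-noise power into the generic error-variance identity derived in Appendix~\ref{Appendix1}. First I would write the pilot-correlated (matched-filter) scalar observation collected during each LI-training segment. During the $\Lambda_u=(1-w)L_{\rm LI}$ symbols reserved for the UE, the typical UE sends a private training vector over its own loop channel $h^u_{\rm LI}$, whose path-loss is the constant $v_{\rm LI}$; the resulting scalar observation has the form $\widetilde{y}^{\,u}_{\rm LI}=\sqrt{\Lambda_u P_u(r_{i}(\rho))\,v_{\rm LI}}\,h^u_{\rm LI}+\zeta_u$, where $\zeta_u$ lumps the AWGN of power $N_0$ together with the co-channel interference active during that segment. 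The analogous BS observation replaces the transmit power by $P$ and the segment length by $\Lambda_b=wL_{\rm LI}$.

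Because the CE section models every channel response, including the loop channels, as a zero-mean circularly symmetric complex Gaussian of unit variance, the second step is to apply the very same LMMSE computation used in Appendix~\ref{Appendix1}: for an observation $\widetilde{y}=\sqrt{S}\,h+\zeta$ with $\zeta$ uncorrelated with $h$ and of power $\sigma_\zeta^2$, the minimum-mean-squared-error estimate and its error variance satisfy $\sigma_e^2=\left(1+S/\sigma_\zeta^2\right)^{-1}$. This is identical in form to \eqref{NoiseVarianceEq}, so it remains only to identify, for each loop link, the effective pilot-signal power $S$ and the interference-plus-noise power $\sigma_\zeta^2$.

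The only nontrivial point, and the step I expect to require the most care, is pinning down which interference term populates $\sigma_\zeta^2$. This is dictated by the time-orthogonal LI-training protocol of Section~\ref{VarianceCELI}, under which the BSs stay silent while the UEs transmit their LI pilots and vice versa. Consequently, during the UE segment the typical UE receives no BS contribution, and the residual co-channel interference is produced solely by the other simultaneously-transmitting UEs, whose conditional mean is precisely the UEs-to-UEs quantity $\mathbb{E}\left((\mathcal{I}_i^{\rm DL})_{\rm UE}|\rho\right)$ of \eqref{MeanDLUE}; symmetrically, during the BS segment only the other BSs are active, yielding the BSs-to-BSs quantity $\mathbb{E}\left((\mathcal{I}_i^{\rm UL})_{\rm BS}|\rho\right)$ of \eqref{MeanULBS}. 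Substituting $S=\Lambda_u P_u(r_{i}(\rho))v_{\rm LI}$ together with $\sigma_\zeta^2=N_0+\mathbb{E}\left((\mathcal{I}_i^{\rm DL})_{\rm UE}|\rho\right)$ then gives \eqref{NoiseVarianceEqLIu}, while $S=\Lambda_b Pv_{\rm LI}$ together with $\sigma_\zeta^2=N_0+\mathbb{E}\left((\mathcal{I}_i^{\rm UL})_{\rm BS}|\rho\right)$ gives \eqref{NoiseVarianceEqLIb}, completing the argument.
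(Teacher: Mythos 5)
Your proof is correct and follows essentially the same route as the paper's: write the received LI pilot signal, apply the scalar LMMSE error-variance identity from Appendix~\ref{Appendix1}, and use the time-orthogonal LI-training protocol to identify the relevant interference as the UEs-to-UEs and BSs-to-BSs terms. Your explicit justification of which mean-interference term enters each denominator is, if anything, slightly more careful than the paper's own write-up.
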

\begin{proof}
	See Appendix \ref{ProofAppendix}
\end{proof}
\subsection{Data transmission}
Regarding the data transmission period (i.e., for the rest of the coherence block after the CE process) for the DL transmission, all BSs transmit data to their associated UEs, whose FA's location is switched to the port that is estimated to provide the strongest channel in order to have the best reception performance. Hence, the FA's location is instantly switched to the port that satisfies 
\begin{equation}\label{Association}
	\mathtt{i}=\argmax_{i\in \mathcal{N}}\{\big|\hat{g}_{0i}\big|\},
\end{equation}
where $\hat{g}_{0i}$ represents the estimated channel between the serving BS and the $i$-th port of the typical UE. For the sake of simplicity, we assume that all BSs transmit both the data and the pilot-training symbols with the same power $P$ (dBm). Thus, the received signal at the $\mathtt{i}$-th port of the typical UE during the $n$-th channel use, is given by
\begin{align}\label{DataTransmissionDL}
	d^{\rm DL}_{\mathtt{i}}[n] &= \sqrt{\ell\left(r_{\mathtt{i}}(\rho)\right)}\widehat{g}_{0\mathtt{i}} s^b_{\rm 0}[n]+\sqrt{\ell\left(r_{\mathtt{i}}(\rho)\right)}e_{0\mathtt{i}} s^b_{\rm 0}[n]\nonumber\\&+\sum\nolimits_{\substack{t\in\mathbb{N}^+ \\ x_t\in\Phi\backslash x_{0}}}\sqrt{\ell\left(r_\mathtt{i}\left(\|x_t\|\right)\right)}g_{t\mathtt{i}}s^b_t[n]+\sum\nolimits_{\substack{t\in\mathbb{N}^+ \\ y_t\in\Psi}} \sqrt{\ell\left(r_\mathtt{i}\left(\|y_t\|\right)\right)}h_{ti}s^u_t[n]\nonumber\\&+\sqrt{v_{\rm LI}}e_{\rm LI}^us^u_{0}[n]+\eta_0[n],
\end{align}
where $n\in\{L_e+1,\dots, L_c\}$, $s^b_t[n]$ and $s^u_t[n]$ represent independent Gaussian distributed data symbols from the $t$-th BS and the $t$-th interfering UE, respectively, satisfying $\mathbb{E}\left[|s^b_0[n]|^2\right]=P$, $\mathbb{E}\left[|s^b_t[n]|^2\right]=P$ and $\mathbb{E}\left[|s^u_t[n]|^2\right]=P_u(\|\overline{y_t}\|)$; $\eta_0[n]\stackrel{d}{\sim}\mathcal{CN}(0,N_0)$ is AWGN. Note that, the first term of \eqref{DataTransmissionDL} is known at the receiver, while the remaining terms are unknown and are treated as noise. Therefore, an estimate of $s_0[n]$ can be formulated as $\hat{s}_0[n]=\sqrt{\ell\left(r_{\mathtt{i}}(\rho)\right)}\frac{\left(\widehat{g}_{{\rm 0}\mathtt{i}}\right)^*}{\big|\widehat{g}_{{\rm 0}\mathtt{i}}\big|^2}d^{\rm DL}_{\mathtt{i}}[n]$, from which the SINR observed at the $\mathtt{i}$-th port for the DL transmission, denoted as $\gamma_\mathtt{i}^{\rm DL}$, can be written as
\begin{equation}\label{SINRDL}
\gamma_\mathtt{i}^{\rm DL} \!=\! \frac{\frac{P}{r^a_{\mathtt{i}}(\rho)}\left| \widehat{g}_{{\rm 0}\mathtt{i}}\right|^2}{\mathcal{I}^{\rm DL}_\mathtt{i}\!+\!\Sigma^{\rm DL}_i}, 
\end{equation}
where $\mathcal{I}^{\rm DL}_\mathtt{i}$ and $\mathcal{I}_{\rm LI}^{\rm DL}$ represent the aggregate interference and LI for the DL transmission, respectively; $\sigma^2_e$ and $\sigma_{e^u_{\rm LI}}^2$ depict the variance of the CE for the direct and the UEs' LI link, respectively, and $\Sigma^{\rm DL}_i$ is the aggregate noise observed for the DL transmission at the $i$-th port, and is given by $\Sigma^{\rm DL}_i = \sigma_{e^u_{\rm LI}}^2\mathcal{I}_{\rm LI}^{\rm DL}\!+\!\frac{P}{r^a_{\mathtt{i}}(\rho)}\sigma^2_{e_\mathtt{i}}\!+\!N_0$.

In addition, during the data transmission period, all UEs transmit data to their associated BSs. We assume that both the data and the training-pilot symbols are transmitted with power $P_u(R)$ dBm, according to the adopted power control scheme. Therefore, for the typical BS that is associated with the $\mathtt{i}$-th port of its serving UE, the received signal during the $n$-th channel use, is given by 
\begin{align}\label{DataTransmissionUL}
	d^{\rm UL}_{\mathtt{i}}[n] &= \sqrt{\ell\left(r_{\mathtt{i}}(\rho)\right)}\widehat{g}_{0\mathtt{i}} s^u_{\rm 0}[n]+\sqrt{\ell\left(r_{\mathtt{i}}(\rho)\right)}e_{0\mathtt{i}} s^u_{\rm 0}[n]\nonumber\\&+\sum\nolimits_{\substack{t\in\mathbb{N}^+ \\ x_t\in\Phi}}\sqrt{\ell\left(r_\mathtt{i}\left(\|x_t\|\right)\right)}g_{t\mathtt{i}}s^b_t[n]+\sum\nolimits_{\substack{t\in\mathbb{N}^+ \\ y_t\in\Psi\backslash x_0}} \sqrt{\ell\left(r_\mathtt{i}\left(\|y_t\|\right)\right)}h_{ti}s^u_t[n]\nonumber\\&+\sqrt{v_{\rm LI}}e^b_{\rm LI}s^b_{0}[n]+\eta_0[n],
\end{align}
where $n\in\{L_e+1,\dots, L_c\}$, $s^u_0[n]$ represents independent Gaussian distributed data symbols from the typical UE, satisfying $\mathbb{E}\left[|\bar{s}_0[n]|^2\right]=P_u(\ell(r_i(\rho)))$. Similar as before, the first term of \eqref{DataTransmissionUL} is known at the receiver, while the remaining terms are unknown and are treated as additive noise. By following similar methodology as for the DL transmission, the SINR observed at the typical BS for the UL transmission, denoted as $\gamma_\mathtt{i}^{\rm UL}$, can be written as
\begin{equation}\label{SINRUL}
	\gamma_\mathtt{i}^{\rm UL} \!=\! \frac{\frac{P_u(\ell(r_i(\rho)))}{r^a_{\mathtt{i}}(\rho)}\left| \widehat{g}_{{\rm 0}\mathtt{i}}\right|^2}{\mathcal{I}^{\rm UL}_\mathtt{i}\!+\!\Sigma^{\rm UL}_i}, 
\end{equation}
where $\mathcal{I}^{\rm UL}_\mathtt{i}$ and $\mathcal{I}_{\rm LI}^{\rm UL}$ represent the aggregate interference and LI for the UL transmission, respectively; $\sigma_{e^b_{\rm LI}}^2$ depict the variance of the CE for the BSs' LI link and $\Sigma^{\rm UL}_i$ is the aggregate noise observed by the typical BS that is served by the $i$-th port of the typical UE for the UL transmission, and is given by $\Sigma^{\rm UL}_i = \sigma_{e^b_{\rm LI}}^2\mathcal{I}_{\rm LI}^{\rm UL}\!+\!\frac{P_u(\ell(r_i(\rho)))}{r^a_{\mathtt{i}}(\rho)}\sigma^2_{e_\mathtt{i}}\!+\!N_0$.

\section{FA-aided FD Cellular Networks}\label{PerformanceAnalysis}
In this section, we analytically derive both the DL and the UL outage probabilities, as well as the sum-rate performance of a homogeneous FA-aided FD cellular network. Initially, we assess the statistical properties of the estimated channels under the adopted LMMSE-based CE technique, providing analytical expressions for the joint pdf and cumulative distribution function (cdf) of the estimated channels. Then, the SINR cdf is defined for both the DL and the UL transmission. Finally, by leveraging the derived SINR distribution framework, analytical expressions for the sum-rate performance in the context of the presented homogeneous FA-aided FD cellular network are obtained.

\subsection{Preliminary Results}
Firstly, the joint pdf and cdf of $|\hat{g}_{{\rm 0}1}|,\dots,|\hat{g}_{{\rm 0}N}|$, conditioned on $\rho$, are given in the following lemma.
\begin{lemma}\label{Lemma1}
	The joint cdf and pdf of $\left| \widehat{g}_{{\rm 0}1}\right|,\dots,\left|\widehat{g}_{{\rm 0}N}\right|$, conditioned on $\rho$, are given by
	\begin{equation}\label{CDF}
		F_{\left| \widehat{g}_{{\rm 0}1}\right|,\dots,\left|\widehat{g}_{{\rm 0}N}\right|}(\tau_1,\dots,\tau_{N}|\rho)=\!\int_0^{\frac{\tau_1^2}{\widetilde{\sigma}_1^2}}\!\exp\!\left(-t\right)\!\prod_{j\in\mathcal{N}}\!\left[\!1\!-\!Q_1\left(\sqrt{2\mu_j^2\frac{\widetilde{\sigma}_1^2}{\widetilde{\sigma}_j^2}t},\sqrt{\frac{2}{\widetilde{\sigma}_j^2}}\tau_j\right)\right]{\rm d}t
	\end{equation}
	and
	\begin{equation}\label{PDF}
		f_{\left| \widehat{g}_{{\rm 0}1}\right|,\dots,\left|\widehat{g}_{{\rm 0}N}\right|}(\tau_1,\dots,\tau_{N}|\rho)=\prod_{\substack{i\in\mathcal{N}\\ (\mu_1\triangleq 0)}}\frac{2\tau_i}{\widetilde{\sigma}_i^2}\exp\left(-\frac{\tau_i^2+\mu^2_i\tau_1^2}{\widetilde{\sigma}_i^2}\right)I_0\left(\frac{2\mu_i\tau_1\tau_i}{\widetilde{\sigma}_i^2}\right),
	\end{equation}
	respectively, where $\tau_1,\dots,\tau_{N}\geq 0$, $I_0(\cdot)$ depicts the zero-order modified Bessel function of the first kind, $Q_1(\cdot,\cdot)$ is the first-order Marcum Q-function, and $\widetilde{\sigma}_i^2=\sigma^2(1-\mu_i^2)+\sigma^2_{e_i}|_{\rho}$.
\end{lemma}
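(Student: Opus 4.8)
The plan is to reduce the problem to the classical magnitude statistics of a correlated complex-Gaussian vector, exactly as in the perfect-CSI Wong model, and then to fold the channel-estimation error into the per-port variances. First I would argue that, conditioned on $\rho$, the vector $(\widehat{g}_{01},\dots,\widehat{g}_{0N})$ is jointly zero-mean complex Gaussian: the LMMSE rule \eqref{VarianceProof} makes each $\widehat{g}_{0i}$ a linear functional of the matched-filter output $\widetilde{y}_i$, which is itself a linear combination of the complex-Gaussian channel $g_{0i}$ and independent Gaussian noise-plus-interference. The key structural observation is that the pilot segments are transmitted sequentially (Fig. \ref{Time}), so the noise-plus-interference corrupting different ports is independent; consequently the estimation errors $e_{0i}=g_{0i}-\widehat{g}_{0i}$ are mutually independent and add on top of the channel innovations without altering the reference correlation coefficients $\mu_i$. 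Combining this with the channel model, in which $g_{0i}=\mu_i g_{01}+u_i$ with $u_i$ independent of $g_{01}$ and $\mathbb{E}[|u_i|^2]=\sigma^2(1-\mu_i^2)$, I would conclude that, conditioned on the reference estimate $\widehat{g}_{01}$, each $\widehat{g}_{0i}$ ($i\ge 2$) is an independent complex Gaussian with conditional mean $\mu_i\widehat{g}_{01}$ and conditional variance $\widetilde{\sigma}_i^2=\sigma^2(1-\mu_i^2)+\sigma^2_{e_i}|_\rho$ --- the two terms being the port-$i$ channel innovation and the estimation-error variance of Lemma \ref{Proposition1} --- while the reference obeys $\widehat{g}_{01}\sim\mathcal{CN}(0,\widetilde{\sigma}_1^2)$.

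Next I would pass to magnitudes. Conditioned on $\widehat{g}_{01}$, the reference magnitude $|\widehat{g}_{01}|$ is Rayleigh with parameter $\widetilde{\sigma}_1^2$, which is precisely the $i=1$ factor of \eqref{PDF} once the convention $\mu_1\triangleq 0$ and $I_0(0)=1$ is applied; for each $i\ge 2$, $|\widehat{g}_{0i}|$ is Rician with non-centrality $\mu_i|\widehat{g}_{01}|$ and scale $\widetilde{\sigma}_i^2$. Because a Rician magnitude density depends on the conditioning variable only through $|\widehat{g}_{01}|$ by circular symmetry, the magnitudes stay conditionally independent given $|\widehat{g}_{01}|=\tau_1$, so multiplying the reference Rayleigh density by the $N-1$ conditional Rician densities yields the product form \eqref{PDF} at once.

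Finally, for the cdf \eqref{CDF} I would integrate this pdf over $\{|\widehat{g}_{0i}|\le\tau_i\}$. For each non-reference port the inner integral $\int_0^{\tau_i}$ of the conditional Rician density is the Rician cdf, which I write through the first-order Marcum function as $1-Q_1\big(\sqrt{2}\mu_i\tau_1/\widetilde{\sigma}_i,\ \sqrt{2}\tau_i/\widetilde{\sigma}_i\big)$; the remaining integral over $|\widehat{g}_{01}|\in[0,\tau_1]$ is linearized by the substitution $t=|\widehat{g}_{01}|^2/\widetilde{\sigma}_1^2$, which turns the Rayleigh factor into $e^{-t}\diff t$, sends the upper limit to $\tau_1^2/\widetilde{\sigma}_1^2$, and rewrites the first Marcum argument as $\sqrt{2\mu_i^2(\widetilde{\sigma}_1^2/\widetilde{\sigma}_i^2)\,t}$, thereby reproducing \eqref{CDF}.

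The hard part will be the first step: showing that the estimate vector keeps the exact conditional-Gaussian form with conditional mean $\mu_i\widehat{g}_{01}$ and variance $\widetilde{\sigma}_i^2$. A direct LMMSE computation would produce a shrinkage factor on the conditional mean and a somewhat different exact variance, so the clean statement really rests on the modeling convention that the sequential, per-port estimation injects an \emph{independent} additive complex-Gaussian error of variance $\sigma^2_{e_i}$ on each port while preserving the reference anchoring $\mu_i$ of the underlying correlation structure. I would make this decomposition explicit --- via the orthogonality principle for the reference term and the cross-port independence of the errors guaranteed by the sequential pilot design --- so that the conditional independence invoked in the magnitude step is fully licensed.
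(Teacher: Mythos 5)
Your proposal follows essentially the same route as the paper's proof in Appendix E: treat each estimate as the true correlated Gaussian channel plus an independent Gaussian estimation error so that $\widehat{g}_{0i}\,|\,\widehat{g}_{01}$ is Gaussian with mean $\mu_i\widehat{g}_{01}$ and variance $\widetilde{\sigma}_i^2$, obtain the joint pdf as a Rayleigh factor times conditional Rician factors, and get the cdf by writing each inner integral as $1-Q_1(\cdot,\cdot)$ and substituting $t=\tau_1^2/\widetilde{\sigma}_1^2$. Your closing remark correctly isolates the one point the paper glosses over --- that the exact LMMSE estimate would carry a shrinkage factor, so the clean form rests on the convention of an independent additive error of variance $\sigma^2_{e_i}$ per port --- which is precisely the assumption the paper adopts without comment.
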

\begin{proof}
	See Appendix \ref{Appendix2}.
\end{proof}
The overall performance of FD transceivers in the context of large-scale multi-cell networks is jeopardized by the increase of both the intra- and inter- cell co-channel interference \cite{SGbook}. According to the considered network deployment, the overall interference observed by the $i$-th port of the typical UE and the typical BS, is given by \eqref{OverallDL} and \eqref{OverallUL}, respectively.  Even though the performance of a communication network by considering the actual multi-user interference is analytically tractable for the PPP case with independent fading channels, in most relevant (realistic) models, it is either impossible to analytically analyze or cumbersome to evaluate even numerically. Motivated by the aforementioned discussion, the following proposition presents our assumption to approximate the multi-user interference distribution of large-scale wireless networks by using Gamma distribution, aiming to provide simple and tractable expressions for the outage and sum-rate performance.
\begin{proposition}\label{InterferenceDistribution}
	The multi-user interference observed by the typical FD receiver (i.e., BS or UE), conditioned on the distance its serving transmitter i.e., $\rho$, follows a Gamma distribution with pdf
	\begin{equation}\label{Gammapdf}
		f_{\mathcal{I}}(x|\rho)=\frac{x^{\varpi-1}\exp\left(-\frac{x}{\varrho}\right)}{\Gamma[\varpi]\varrho^\varpi},\ x>0,
	\end{equation}
	with shape parameter $\varpi=(\mathbb{E}[\mathcal{I}])^2/{\rm Var}(\mathcal{I})$ and scale parameter $\varrho={\rm Var}(\mathcal{I})/\mathbb{E}[\mathcal{I}]$.
\end{proposition}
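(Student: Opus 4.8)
The plan is to justify the Gamma approximation by the method of moments, matching the first two moments of the aggregate interference $\mathcal{I}$ to those of the candidate $\Gamma(\varpi,\varrho)$ law. Recall that if $\mathcal{I}\sim\Gamma(\varpi,\varrho)$ with the pdf in \eqref{Gammapdf}, then $\mathbb{E}[\mathcal{I}]=\varpi\varrho$ and ${\rm Var}(\mathcal{I})=\varpi\varrho^2$. Treating these as two equations in the two unknowns $(\varpi,\varrho)$ and solving — dividing the variance relation by the mean relation yields $\varrho={\rm Var}(\mathcal{I})/\mathbb{E}[\mathcal{I}]$, and back-substituting gives $\varpi=\mathbb{E}[\mathcal{I}]/\varrho=(\mathbb{E}[\mathcal{I}])^2/{\rm Var}(\mathcal{I})$ — reproduces exactly the stated parameters. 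Hence the whole task reduces to supplying the conditional mean and variance of $\mathcal{I}$ given $\rho$.

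The conditional mean is already in hand: for the DL (resp.\ UL) receiver it is $\mathbb{E}(\mathcal{I}_i^{\rm DL}|\rho)$ from Lemma \ref{MeanInterferenceDL} (resp.\ $\mathbb{E}(\mathcal{I}_i^{\rm UL}|\rho)$ from Lemma \ref{MeanInterferenceUL}), obtained by summing the BS- and UE-originated contributions in \eqref{OverallDL}--\eqref{OverallUL}. So the only remaining quantity is ${\rm Var}(\mathcal{I}|\rho)$. Since $\mathcal{I}$ is a shot-noise functional of the PPP of BSs together with the independent (thinned) point process of active UEs, I would obtain the variance as the sum of the per-process variances. For a PPP, Campbell's theorem gives ${\rm Var}\!\left(\sum_{x}f(x,m_x)\right)=\int \mathbb{E}_m[f(x,m)^2]\,\lambda(x)\,{\rm d}x$, i.e.\ the variance equals the integral of the squared per-point contribution against the thinned intensity $f_k(r)$ introduced in Section \ref{MUInterf}. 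Concretely, this replaces the fading factor $\mathbb{E}[|g|^2]=\sigma^2$ appearing in each mean integral by the second moment $\mathbb{E}[|g|^4]=2\sigma^4$, doubles the path-loss exponent in the integrand (so $\ell(r)^2=r^{-2a}$) and squares the transmit power, and, for the UE-originated terms, squares the power-control weight $P_u(\|\overline{y}\|)^2=\min\{\omega\|\overline{y}\|^{a\epsilon},P_{\rm m}\}^2$. Each of the four resulting radial integrals is then evaluated over the corresponding interference-free region exactly as in Appendix \ref{Appendix4}.

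The main obstacle will be the UE-originated variance integrals. Doubling the path-loss exponent makes the integrand decay like $r^{1-2a}$, which tightens the role of the cut-offs $b_u,b_b$ and of the power-control transition distance $(P_{\rm m}/\omega)^{1/(a\epsilon)}$; the squared $\min\{\cdot\}$ term must be handled by splitting the radial domain at that distance, producing incomplete-Gamma and exponential-integral contributions analogous to \eqref{MeanULUE} and \eqref{MeanDLUE} but with shifted arguments (e.g.\ the exponent $2-\tfrac{a}{2}$ becoming $2-a$). Once these are assembled, substituting $\mathbb{E}[\mathcal{I}|\rho]$ and ${\rm Var}(\mathcal{I}|\rho)$ into $\varpi$ and $\varrho$ completes the match. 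I would close by emphasising that the Gamma family is the standard, empirically validated surrogate for positive shot-noise interference in stochastic-geometry analyses, so the two-moment fit is expected to be accurate; a formal distributional equality is neither asserted nor required, the proposition being an approximation adopted purely for analytical tractability.
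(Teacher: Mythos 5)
Your proposal is correct and is essentially the justification the paper leaves implicit: the paper states this proposition with no proof at all, introducing it explicitly as ``our assumption to approximate the multi-user interference distribution,'' so the only substantive content is the moment-matching identification $\varpi\varrho=\mathbb{E}[\mathcal{I}]$, $\varpi\varrho^{2}={\rm Var}(\mathcal{I})$, which you derive correctly, together with the recognition that no distributional equality is being claimed. Your Campbell-theorem route to the conditional variance is equivalent to the Laplace-transform cumulant formulas the paper sketches in Appendix~\ref{Appendix4} (indeed, your ${\rm Var}=\int\mathbb{E}[f^{2}]\lambda\,{\rm d}x$ is the correct second cumulant, whereas the paper's displayed variance formula there carries a sign slip), so your account is, if anything, slightly more careful than the paper's.
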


\subsection{Outage Performance}\label{SINRstatistics}
We evaluate the outage probability of the typical receiver (i.e., either a BS or a UE) for both the DL and UL transmissions of homogeneous FA-aided FD cellular networks. The outage probability for both the DL and UL transmissions, $\mathcal{P}_o^{\rm DL}(\theta)$ and $\mathcal{P}_o^{\rm UL}(\theta)$, respectively, is the probability that the SINR is less than a threshold $\theta$. Therefore, we can mathematically describe the above-mentioned DL and UL outage performance with the probabilities $\mathcal{P}_o^{\rm DL}(\theta)= \mathbb{P}\left[\gamma_\mathtt{i}^{\rm DL}<\theta\right]$ and $\mathcal{P}_o^{\rm UL}(\theta)= \mathbb{P}\left[\gamma_\mathtt{i}^{\rm UL}<\theta\right]$, respectively, which are analytical evaluated in the following theorems.

\begin{theorem}\label{Theorem1}
The outage probability achieved by the typical receiver for the $\varUpsilon$ transmission, where $\varUpsilon\in\{{\rm DL,UL}\}$, in the context of FA-aided FD cellular networks, is given by 
\begin{equation}\label{DLOutage}
	\mathcal{P}_o^\varUpsilon(\theta)= \prod_{i\in\mathcal{N}}\int_{0}^\infty \mathcal{P}_o^\varUpsilon(\theta|\rho)2\pi\lambda_b \rho\exp\left(-\pi\lambda_b  \rho^2\right){\rm d}\rho,
\end{equation}
where 
\begin{align}\label{Conditional}
\mathcal{P}_o^\varUpsilon(\theta|\rho) &\!=\! \int\nolimits_0^\infty\!\Bigg(\int\nolimits_0^{\frac{\Theta^2_\varUpsilon(x_1)}{\widetilde{\sigma}_1^2}}\exp\left(-t\right)\nonumber\\&\times\prod_{j\in\mathcal{N}}\!\left[\int\nolimits_0^\infty\!\left(1\!-\!Q_1\!\left(\sqrt{2\mu_j^2\frac{\widetilde{\sigma}_1^2}{\widetilde{\sigma}_j^2}t},\sqrt{\frac{2}{\widetilde{\sigma}_j^2}}\Theta_\varUpsilon(x_j)\right)f_{\mathcal{I}}(x_j|\rho){\rm d}x_j\right)\right]{\rm d}t\Bigg)f_{\mathcal{I}}(x_1|\rho){\rm d}x_1,
\end{align}
and
\begin{equation}\label{Threshold}
	\Theta_\varUpsilon(i)=\begin{cases}
		\frac{\theta r_i^{a}(\rho)}{P}\left(\mathcal{I}^{\rm DL}_i\!+\!\Sigma^{\rm DL}_i\right),&\text{if $\varUpsilon=DL$},\\
		\frac{\theta r_i^{a}(\rho)}{P_u(\ell(r_i(\rho)))}\left(\mathcal{I}^{\rm UL}_i\!+\!\Sigma^{\rm UL}_i\right),&\text{if $\varUpsilon=UL$},
	\end{cases}
\end{equation}
$\Sigma^{\rm DL}_i=\sigma_{e^u_{\rm LI}}^2\mathcal{I}_{\rm LI}^{\rm DL}+\frac{P}{r^a_{{i}}(\rho)}\sigma^2_{e_{i}}+N_0$, $\Sigma^{\rm UL}_i=\sigma_{e^b_{\rm LI}}^2\mathcal{I}_{\rm LI}^{\rm UL}+\frac{P_u(\ell(r_i(\rho)))}{r^a_{{i}}(\rho)}\sigma^2_{e_{i}}+N_0$,  $\mathcal{P}_o^{\varUpsilon}(\theta|\rho)$ represents the conditional outage probability for the $\varUpsilon$ transmission, that is given by \eqref{Conditional}; $\mathcal{I}_{\rm LI}^{\rm DL}$ and $\mathcal{I}_{\rm LI}^{\rm UL}$ depict the residual LI observed at the $i$-th port of the typical UE and the typical BS, respectively, that are given by \eqref{LI}, and $\widetilde{\sigma}_j^2=\sigma^2(1-\mu_j^2)+\sigma^2_{e_i}$.
\end{theorem}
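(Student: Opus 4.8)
The plan is to compute the conditional outage probability $\mathcal{P}_o^{\varUpsilon}(\theta|\rho)$ first, and then de-condition over $\rho$ using its contact-distance pdf \eqref{ContactPDF}. By definition, $\mathcal{P}_o^{\varUpsilon}(\theta)=\mathbb{P}[\gamma_\mathtt{i}^{\varUpsilon}<\theta]$, where the served port $\mathtt{i}$ is the port with the strongest \emph{estimated} channel, i.e.\ the one satisfying \eqref{Association}. The key observation is that because the data-reception port is selected by the argmax in \eqref{Association}, and because the SINR expressions \eqref{SINRDL}--\eqref{SINRUL} depend on the estimated channel magnitude $|\widehat{g}_{0\mathtt{i}}|$, the event $\{\gamma_\mathtt{i}^{\varUpsilon}<\theta\}$ can be re-expressed directly in terms of the estimated channel magnitudes at \emph{all} ports: an outage at the selected port forces every port $j$ to have an estimated magnitude below a threshold $\Theta_{\varUpsilon}(j)$ determined by that port's own interference and residual-noise contribution. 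First I would rearrange \eqref{SINRDL} (and symmetrically \eqref{SINRUL}) to write $\{\gamma_j^{\varUpsilon}<\theta\} \Leftrightarrow \{|\widehat{g}_{0j}|^2 < \Theta_{\varUpsilon}^2(j)\}$ with $\Theta_{\varUpsilon}(j)$ exactly the thresholds defined in \eqref{Threshold}, using $\Sigma^{\varUpsilon}_j$ as the aggregate noise-plus-residual-LI term. Since the selected port maximizes $|\widehat{g}_{0i}|$, outage of the selected port is equivalent to the intersection $\bigcap_{j\in\mathcal{N}}\{|\widehat{g}_{0j}| < \Theta_{\varUpsilon}(j)\}$, which is precisely a \emph{joint} cdf evaluation of the estimated channel magnitudes.

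The second step is to invoke Lemma~\ref{Lemma1}, which supplies the joint cdf of $(|\widehat{g}_{01}|,\dots,|\widehat{g}_{0N}|)$ conditioned on $\rho$ in the integral-plus-product Marcum-$Q$ form \eqref{CDF}. Substituting $\tau_j=\Theta_{\varUpsilon}(j)$ into \eqref{CDF} turns the joint-magnitude outage event into the single $t$-integral with the product over $j$ of $\left[1-Q_1\!\left(\sqrt{2\mu_j^2(\widetilde{\sigma}_1^2/\widetilde{\sigma}_j^2)t},\sqrt{2/\widetilde{\sigma}_j^2}\,\Theta_{\varUpsilon}(j)\right)\right]$. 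At this point the thresholds $\Theta_{\varUpsilon}(j)$ still depend on the random aggregate interference terms $\mathcal{I}_j^{\varUpsilon}$ (and, through $\Sigma_j^{\varUpsilon}$, on the residual LI), so the expression must be further averaged over the interference. Here I would use Proposition~\ref{InterferenceDistribution}, which models each conditional aggregate interference as a Gamma random variable with pdf $f_{\mathcal{I}}(x|\rho)$ in \eqref{Gammapdf}, the shape and scale matched to the means computed in Lemmas~\ref{MeanInterferenceUL}--\ref{MeanInterferenceDL}. Averaging the reference-port contribution against $f_{\mathcal{I}}(x_1|\rho)$ and each factor $j$ against its own independent copy $f_{\mathcal{I}}(x_j|\rho)$ yields exactly the nested integral structure of \eqref{Conditional}.

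The final step is de-conditioning: integrating $\mathcal{P}_o^{\varUpsilon}(\theta|\rho)$ against the contact-distance pdf $2\pi\lambda_b\rho\exp(-\pi\lambda_b\rho^2)$ from \eqref{ContactPDF} produces \eqref{DLOutage}. The main obstacle—and the step requiring the most care—is the justification of treating the per-port interference terms $\mathcal{I}_j^{\varUpsilon}$ as \emph{independent} Gamma variables across ports when pulling the product through the expectation, since the interference fields seen by neighbouring ports are in reality correlated through the common point processes $\Phi,\Psi$ and the correlated fading structure of the FA ports. This is precisely the tractability-versus-accuracy tradeoff that Proposition~\ref{InterferenceDistribution} is introduced to resolve; the clean factorized form of \eqref{Conditional} follows from adopting the Gamma-distribution approximation port-wise, and the moment-matching of $\varpi$ and $\varrho$ to the exact conditional means in Lemmas~\ref{MeanInterferenceUL} and~\ref{MeanInterferenceDL} is what makes the approximation faithful. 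A secondary technical point is verifying that the rearrangement of \eqref{SINRDL}/\eqref{SINRUL} correctly absorbs the CE-error variance $\sigma^2_{e_i}$ and the residual-LI terms into $\Sigma_i^{\varUpsilon}$, so that the thresholds \eqref{Threshold} are the exact boundaries of the magnitude-outage events.
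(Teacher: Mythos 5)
Your proposal follows essentially the same route as the paper's own proof: you recast outage of the selected port as the joint event $\bigcap_{j\in\mathcal{N}}\{|\widehat{g}_{0j}|<\Theta_\varUpsilon(j)\}$, substitute the thresholds \eqref{Threshold} into the joint cdf of Lemma~\ref{Lemma1}, average the resulting Marcum-$Q$ product over the Gamma-approximated interference of Proposition~\ref{InterferenceDistribution}, and finally de-condition with the contact-distance pdf \eqref{ContactPDF}. Your added remark on treating the per-port interference terms as independent when pulling the product through the expectation makes explicit an approximation the paper applies silently, but the argument is otherwise identical.
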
 
\begin{proof}
	See Appendix \ref{Appendix3}
\end{proof}

{In spite of the fact that Theorem \ref{Theorem1} provides an analytical approach to obtain the outage probability of the considered FA-aided FD communications, the analysis of the achieved performance is still cumbersome and tedious, impeding the extraction of meaningful insights. To this end, we evaluate the achieved performance in the asymptotic regime. More specifically, by considering the special case where the multi-user interference of large-scale wireless networks is approximated by its mean value i.e., $\mathcal{I}_j^\varUpsilon = \mathbb{E}\left[\mathcal{I}_j^\varUpsilon\right]$, with $j\in\mathcal{N}$ and $\varUpsilon\in\{{\rm DL, UL}\}$, in the following lemma, an approximation for the conditional outage probability i.e., $\mathcal{P}_o^\varUpsilon(\theta|\rho)$, can be derived.

\begin{lemma}
The conditional outage probability achieved by the typical receiver for the $\varUpsilon$ transmission, where $\varUpsilon\in\{{\rm DL,UL}\}$, in the context of FA-aided FD cellular networks, can be approximated by
\begin{equation}\label{ConditionalA}
\mathcal{P}_o^\varUpsilon(\theta|\rho) \!\simeq\! \int\nolimits_0^{\frac{\Theta^2_\varUpsilon(1)}{\widetilde{\sigma}_1^2}}\exp\left(-t\right)\prod_{j\in\mathcal{N}}\!\left[1\!-\!Q_1\!\left(\sqrt{2\mu_j^2\frac{\widetilde{\sigma}_1^2}{\widetilde{\sigma}_j^2}t},\sqrt{\frac{2}{\widetilde{\sigma}_j^2}}\Theta_\varUpsilon(j)\right)\right]{\rm d}t,
\end{equation}
where 
\begin{equation}\label{ThresholdA}
\Theta_\varUpsilon(i)=\begin{cases}
\frac{\theta r_i^{a}(\rho)}{P}\left(\mathbb{E}[\mathcal{I}^{\rm DL}_i]\!+\!\Sigma^{\rm DL}_i\right),&\text{if $\varUpsilon=DL$},\\
\frac{\theta r_i^{a}(\rho)}{P_u(\ell(r_i(\rho)))}\left(\mathbb{E}[\mathcal{I}^{\rm UL}_i]\!+\!\Sigma^{\rm UL}_i\right),&\text{if $\varUpsilon=UL$},
\end{cases}
\end{equation}
and $\mathbb{E}[\mathcal{I}^\varUpsilon_i]$ represents the mean interference observed at the $i$-th port of the FA for the $\varUpsilon$ direction, where $\varUpsilon\in\{{\rm DL,UL}\}$ and $i\in\mathcal{N}$.
\end{lemma}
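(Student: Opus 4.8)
The plan is to derive \eqref{ConditionalA} directly from the exact conditional outage probability \eqref{Conditional} of Theorem \ref{Theorem1} by imposing the deterministic mean-interference approximation $\mathcal{I}_j^\varUpsilon=\mathbb{E}[\mathcal{I}_j^\varUpsilon]$ for every $j\in\mathcal{N}$ and $\varUpsilon\in\{{\rm DL,UL}\}$. The key structural observation is that, in \eqref{Conditional}, the multi-user interference enters \emph{only} through the thresholds $\Theta_\varUpsilon(\cdot)$ of \eqref{Threshold}; all remaining quantities appearing in the integrand ($t$, the Marcum $Q$-functions, the variances $\widetilde{\sigma}_i^2$, and the correlation parameters $\mu_j$) are independent of the interference realizations. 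Consequently, the averaging over the interference density $f_{\mathcal{I}}(\cdot|\rho)$ --- carried out by the outer integral in $x_1$ and by the per-port integrals in $x_j$ --- acts solely on the thresholds, and the whole simplification reduces to replacing each random threshold by its deterministic counterpart.

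First I would substitute the conditional mean interference values supplied by Lemma \ref{MeanInterferenceUL} and Lemma \ref{MeanInterferenceDL}, namely $\mathbb{E}(\mathcal{I}_i^{\rm UL}|\rho)$ and $\mathbb{E}(\mathcal{I}_i^{\rm DL}|\rho)$, in place of the random interference appearing in \eqref{Threshold}. Under this substitution each threshold $\Theta_\varUpsilon(x_j)$ is frozen to the deterministic value $\Theta_\varUpsilon(j)$ of \eqref{ThresholdA}, so the integrand of every per-port factor ceases to depend on its integration variable $x_j$. Since $f_{\mathcal{I}}(\cdot|\rho)$ is a proper probability density (Proposition \ref{InterferenceDistribution}), each factor collapses as
\begin{equation*}
\int_0^\infty \left(1 - Q_1\!\left(\sqrt{2\mu_j^2\frac{\widetilde{\sigma}_1^2}{\widetilde{\sigma}_j^2}t},\sqrt{\frac{2}{\widetilde{\sigma}_j^2}}\,\Theta_\varUpsilon(j)\right)\right) f_{\mathcal{I}}(x_j|\rho)\,{\rm d}x_j = 1 - Q_1\!\left(\sqrt{2\mu_j^2\frac{\widetilde{\sigma}_1^2}{\widetilde{\sigma}_j^2}t},\sqrt{\frac{2}{\widetilde{\sigma}_j^2}}\,\Theta_\varUpsilon(j)\right),
\end{equation*}
because $\int_0^\infty f_{\mathcal{I}}(x_j|\rho)\,{\rm d}x_j=1$. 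The same argument removes the outer average in $x_1$, while the upper limit of the inner $t$-integral, previously $\Theta_\varUpsilon^2(x_1)/\widetilde{\sigma}_1^2$, is frozen to $\Theta_\varUpsilon^2(1)/\widetilde{\sigma}_1^2$; note that port $1$ plays a dual role --- it both sets the $t$-limit and contributes the $j=1$ factor --- and the mean substitution renders both occurrences deterministic and mutually consistent. Collecting the surviving product over $j\in\mathcal{N}$ inside the single remaining $t$-integral yields exactly \eqref{ConditionalA}.

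The step I expect to require the most care is the mean substitution itself, i.e.\ the replacement $\mathbb{E}[g(\mathcal{I})]\simeq g(\mathbb{E}[\mathcal{I}])$ for the interference-dependent integrand $g$. This is a first-order (deterministic) approximation rather than an identity, and its accuracy rests on the conditional interference being tightly concentrated about its mean relative to the curvature of $g$, which is precisely the regime invoked for the asymptotic analysis. I would emphasize that only the \emph{mean} of each $\mathcal{I}_j^\varUpsilon$ is needed, so the closed-form expressions of Lemmas \ref{MeanInterferenceUL}--\ref{MeanInterferenceDL} suffice and no higher-order moments of the aggregate interference must be computed, which is what keeps the resulting expression tractable and free of the nested interference integrals present in \eqref{Conditional}. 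Since the statement claims only an approximation ($\simeq$), no error bound is required; it is enough to exhibit that the mean substitution reduces \eqref{Conditional} to \eqref{ConditionalA}.
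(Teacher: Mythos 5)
Your proposal is correct and follows essentially the same route as the paper: the paper's own proof consists of the single observation that substituting $\mathcal{I}_i^{\rm DL}$ and $\mathcal{I}_i^{\rm UL}$ by their means $\mathbb{E}[\mathcal{I}_i^{\rm DL}]$ and $\mathbb{E}[\mathcal{I}_i^{\rm UL}]$ in \eqref{Conditional} yields \eqref{ConditionalA}. Your additional explanation of why the averaging integrals over $f_{\mathcal{I}}(\cdot|\rho)$ collapse (the integrands become constant in $x_j$ and the density integrates to one) simply makes explicit what the paper leaves implicit.
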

\begin{proof}
	By substituting the random variables $\mathcal{I}^{\rm DL}_i$ and $\mathcal{I}^{\rm UL}_i$ with their mean values i.e., $\mathbb{E}[\mathcal{I}^{\rm DL}_i]$ and $\mathbb{E}[\mathcal{I}^{\rm UL}_i]$, respectively, the final expression can be derived.
\end{proof}}
\subsection{Sum-Rate Performance}
Another extremely important performance metric is the average sum-rate performance (bits/sec) that indicates the information rate that can be transmitted over a given bandwidth for the considered network deployment. More specifically, the average sum-rate performance measures the quantity of UEs that can be simultaneously supported by a limited RF bandwidth in a defined geographic area. We can mathematically describe the average sum-rate performance, denoted as $\mathcal{R}$, by the following expression
\begin{equation*}
	\mathcal{R} = B_c\left(1-\frac{L_e}{L_c}\right)\left(\log\left(1+\gamma_{\mathtt{i}}^{\rm DL}\right)+\log\left(1+\gamma_{\mathtt{i}}^{\rm UL}\right)\right),
\end{equation*}
where $B_c$ is the coherence bandwidth, $\left(1-\frac{L_e}{L_c}\right)$ depicts the fractional amount of time (relative to the total frame length) used for data transmission; $\gamma_{\mathtt{i}}^{\rm DL}$ and $\gamma_{\mathtt{i}}^{\rm UL}$ are the SINR observed at the typical receiver for the DL and the UL transmission, respectively. In the following theorem, analytical expressions for the average sum-rate performance of the typical FD link in the context of the considered homogeneous FA-aided FD cellular networks are provided, by exploiting the SINR distribution framework that is obtained in Section \ref{SINRstatistics}.
\begin{theorem}
	The average sum-rate performance in the context of FA-aided FD cellular networks, can be expressed as
	\begin{equation}\label{FDrate}
		\mathcal{R} = \frac{B_c\left(1-\frac{L_t}{L_c}\right)}{\ln(2)}\int_0^\infty\frac{\overline{\mathcal{P}}_o^{\rm DL}(\theta)+\overline{\mathcal{P}}_o^{\rm UL}(\theta)}{\tau+1}\diff \theta,
	\end{equation}
	where $\overline{\mathcal{P}}_o^{\rm DL}(\theta)$ and $\overline{\mathcal{P}}_o^{\rm DL}(\theta)$ express the coverage probability for the DL and the UL transmissions, respectively i.e., $\overline{\mathcal{P}}_o^{\rm DL}(\theta)=1-{\mathcal{P}}_o^{\rm DL}(\theta)$ and $\overline{\mathcal{P}}_o^{\rm UL}(\theta)=1-{\mathcal{P}}_o^{\rm UL}(\theta)$.
\end{theorem}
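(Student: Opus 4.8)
The plan is to realize the average sum-rate as an expectation of the instantaneous logarithmic rate over all sources of randomness (the point processes $\Phi$ and $\Psi$, the fading coefficients, the selected port index $\mathtt{i}$, and the serving distance $\rho$), and then convert that expectation into an integral of the coverage probability through a standard change-of-representation identity. Starting from the defining relation, I would write
\begin{equation*}
\mathcal{R}=B_c\Bigl(1-\tfrac{L_e}{L_c}\Bigr)\,\mathbb{E}\Bigl[\log_2\!\bigl(1+\gamma_{\mathtt{i}}^{\rm DL}\bigr)+\log_2\!\bigl(1+\gamma_{\mathtt{i}}^{\rm UL}\bigr)\Bigr],
\end{equation*}
pulling the deterministic pre-log factor $B_c(1-L_e/L_c)$, which is precisely the data-transmission time fraction $L_t/L_c$, outside the expectation and splitting the sum by linearity.

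The key step is the elementary identity $\ln(1+X)=\int_0^\infty \frac{\mathds{1}\{X>\theta\}}{1+\theta}\diff\theta$, valid for every nonnegative random variable $X$. Applying it with $X=\gamma_{\mathtt{i}}^{\rm DL}$ and invoking Tonelli's theorem (legitimate since the integrand is nonnegative) to interchange expectation and integration yields
\begin{equation*}
\mathbb{E}\bigl[\ln(1+\gamma_{\mathtt{i}}^{\rm DL})\bigr]=\int_0^\infty \frac{\mathbb{P}\bigl[\gamma_{\mathtt{i}}^{\rm DL}>\theta\bigr]}{1+\theta}\diff\theta=\int_0^\infty \frac{\overline{\mathcal{P}}_o^{\rm DL}(\theta)}{1+\theta}\diff\theta,
\end{equation*}
where I recognize $\mathbb{P}[\gamma_{\mathtt{i}}^{\rm DL}>\theta]=1-\mathcal{P}_o^{\rm DL}(\theta)=\overline{\mathcal{P}}_o^{\rm DL}(\theta)$ as exactly the DL coverage probability already characterized in Theorem \ref{Theorem1} (which has internally averaged over $\rho$ against the contact distribution of \eqref{ContactPDF}). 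Repeating the identical manipulation for $X=\gamma_{\mathtt{i}}^{\rm UL}$ produces the analogous term with $\overline{\mathcal{P}}_o^{\rm UL}(\theta)$.

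Finally I would merge the two integrals over the common kernel $\diff\theta/(1+\theta)$, absorb the base change from $\log_2$ to $\ln$ via the factor $1/\ln 2$, and reinstate the deterministic pre-factor, arriving at
\begin{equation*}
\mathcal{R}=\frac{B_c\bigl(1-\tfrac{L_e}{L_c}\bigr)}{\ln 2}\int_0^\infty\frac{\overline{\mathcal{P}}_o^{\rm DL}(\theta)+\overline{\mathcal{P}}_o^{\rm UL}(\theta)}{1+\theta}\diff\theta,
\end{equation*}
which is the claimed expression. I do not expect a genuine analytical obstacle here: the argument is a textbook transformation from an ergodic rate to a coverage-probability integral, and the only point requiring care is the justification of the expectation–integration interchange, which follows immediately from nonnegativity through Tonelli. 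The substantive difficulty—the explicit evaluation of $\overline{\mathcal{P}}_o^{\rm DL}$ and $\overline{\mathcal{P}}_o^{\rm UL}$ via the Gamma-approximated interference of Proposition \ref{InterferenceDistribution} and the joint port statistics of Lemma \ref{Lemma1}—has already been discharged in Theorem \ref{Theorem1}, so the remaining work is purely a change of representation.
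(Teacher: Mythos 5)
Your argument is correct and is essentially the paper's own proof: the paper applies the layer-cake representation to $\log_2(1+\gamma)$ to get $\int_0^\infty \mathbb{P}[\gamma>2^t-1]\,{\rm d}t$ and then substitutes $\theta=2^t-1$, which is exactly your integral identity for $\ln(1+X)$ combined with Tonelli, so the two routes coincide. As a minor point, your final formula with pre-log factor $\left(1-\frac{L_e}{L_c}\right)$ and kernel $\frac{1}{1+\theta}$ is the correct one; the displayed equation in the theorem statement contains typos ($L_t$ in place of $L_e$ and $\tau$ in place of $\theta$).
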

\begin{table*}[t]\centering
	\caption{Simulation Parameters.}\label{Table2}
	\scalebox{0.85}{
		\begin{tabular}{| l | l || l | l |}\hline
			\textbf{Parameter} 		& \textbf{Value} 									& \textbf{Parameter} 						& \textbf{Value}\\\hline
			BSs' density ($\lambda_b$) & $5\times10^{-5}$	&Dimension ratio ($D/L$) & $5$  \\\hline
			Number of ports ($N$) & $15$  	&  Receive sensitivity ($\omega$) & $-40$ dB  \\\hline
			Path-loss exponent ($a$) 	& $4$ 	&  Channel variance ($\sigma$) & $1$ \\\hline
			Scaling constant ($\kappa$)	& $0.2$ 	&  UEs' power constrain ($P_{\rm m}$) & $30$ dBm\\\hline
			Wavelength ($\lambda$) 	& $0.06$ cm 	&  Bandwidth ($W_c$) 	& $100$ MHz  \\\hline
			Initial charge ($q$) 	& $0.07$ V  	&  Coherence time ($T_c$)	& $50$ ms   \\\hline
			Viscosity ($\mu$) 	& $0.002$  &  Noise Variance ($N_o$)	& $10^{-5}$\\\hline
			Voltage difference ($\Delta\phi$) & $10$ V & LI path loss ($v_{\rm LI}$) & 0.001\\\hline
	\end{tabular}\vspace{-20pt}}
\end{table*}
\begin{figure*}[t]
	\begin{subfigure}{.49\textwidth}\centering
		\includegraphics[width=0.9\linewidth]{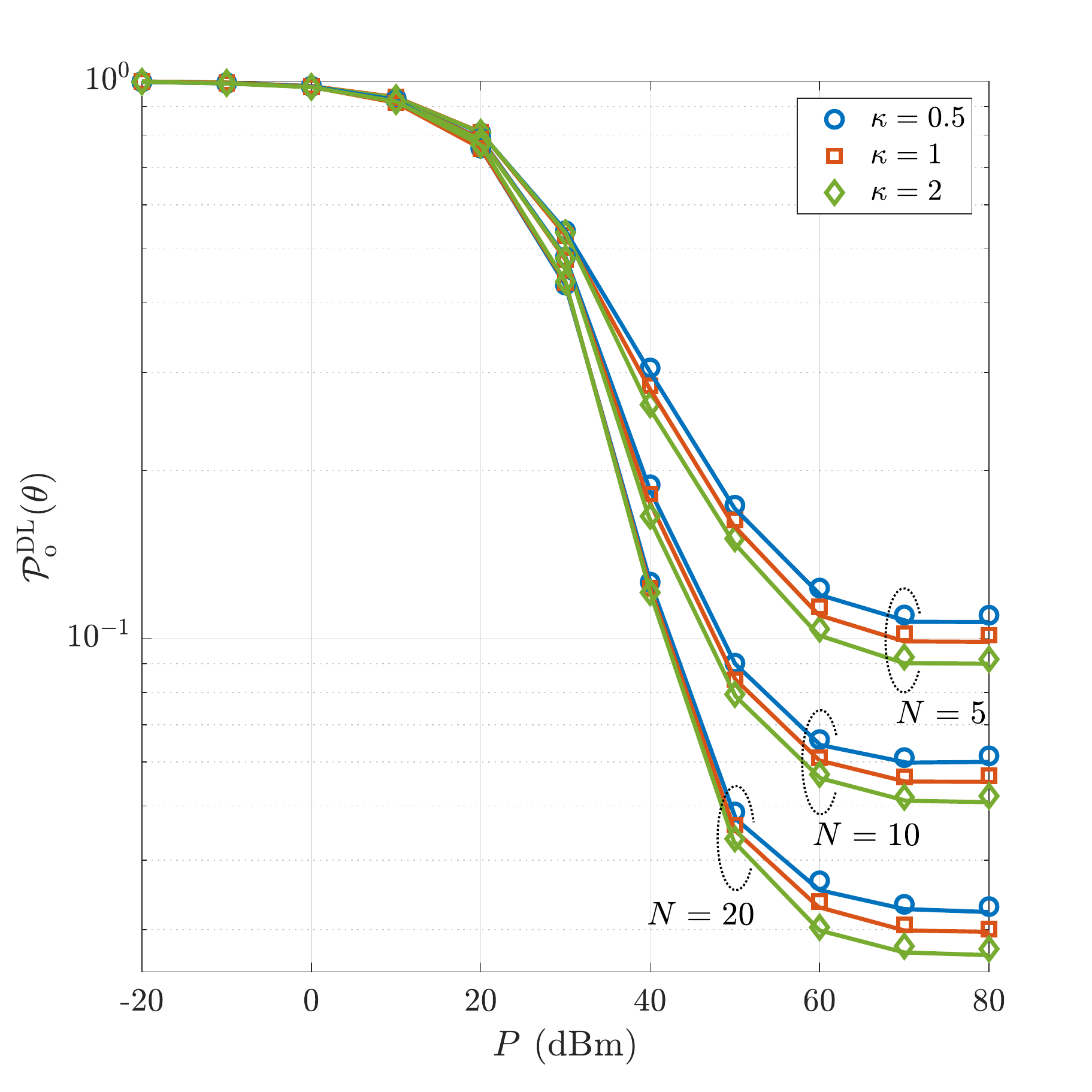}
		\caption{DL outage probability versus the transmit power ($P$).}\label{fig:fig1}
	\end{subfigure}
	\begin{subfigure}{.49\textwidth}\centering
		\includegraphics[width=0.9\linewidth]{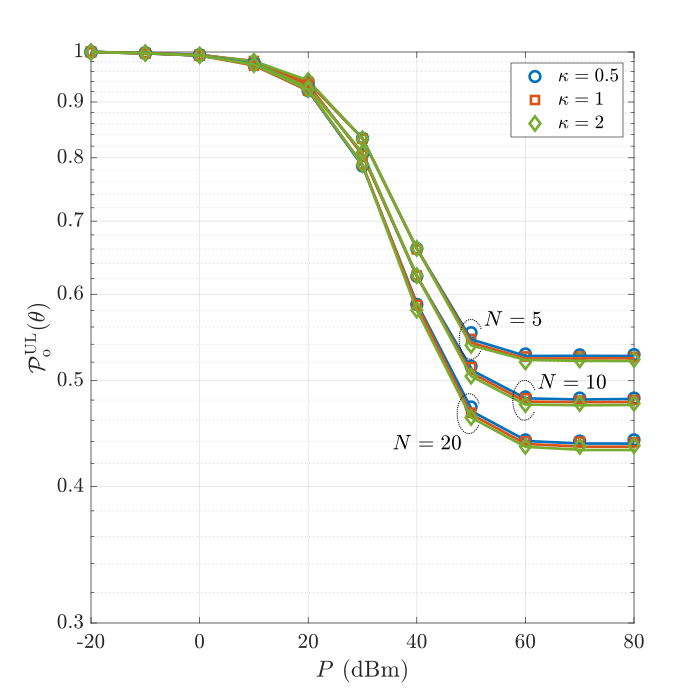}
		\caption{UL outage probability versus the transmit power ($P$).}\label{fig:fig2}
	\end{subfigure}
	\caption{Outage performance for the DL and UL transmission for different $N$ and $\kappa$; $\theta=-20$ dB, $\epsilon = 0.8$.}\label{fig}\vspace{-5mm}
\end{figure*}
\begin{proof}
	Based on the definition, $\mathcal{R}$ is given by
	\begin{align}
		\mathcal{R} &= \mathbb{E}\left[B_c\left(1-\frac{L_e}{L_c}\right)\left(\log\left(1+\gamma_{\mathtt{i}}^{\rm DL}\right)+\log\left(1+\gamma_{\mathtt{i}}^{\rm UL}\right)\right)\right]\nonumber\\
		&=\!B_c\!\left(\!1\!-\!\frac{L_e}{L_c}\right)\!\!\int_0^\infty\!\!\!\big(\mathbb{P}\left[\gamma_{\mathtt{i}}^{\rm DL}\!>\!2^{t}\!-\!1\right]\!+\!\mathbb{P}\left[\gamma_{\mathtt{i}}^{\rm UL}\!>\!2^{t}\!-\!1\right]\big)\!\diff t\nonumber.
	\end{align}
Then, by changing variable for the integral i.e., $\theta=2^t-1$, the final expression can be derived.
\end{proof}

\section{Numerical Results}\label{Numerical}
In this section, we provide numerical results to verify our model and illustrate the performance of large-scale FA-aided FD communications. A summary of the model parameters is given in Table \ref{Table2}. Please note that, the selection of the simulation parameters is made for the sake of the presentation. The use of different values leads to a shifted network performance, but with the same observations and conclusions.

Fig. \ref{fig:fig1} and Fig. \ref{fig:fig2} reveal the effect of the transmit power on the achieved outage performance for both the DL and UL transmissions, respectively. More specifically, we plot the outage probabilities, $\mathcal{P}_{\rm o}^{\rm DL}$ and $\mathcal{P}_{\rm o}^{\rm UL}$, with respect to the transmit power $P$ (dBm), for different number of FAs' ports i.e., $N=\{5,10,20\}$ and scaling constants $\kappa = \{0.5,1,2\}$. We can easily observe that, the number of FAs' ports causes the reduction of the outage performance experienced by a UE in the considered network deployment. This observation was expected since, a higher diversity gain can be attained with the increased number of FA ports, leading to an enhanced SINR observed by the UEs. In addition, it is clear from the figure that the outage probability asymptotically converges to a constant value, since as the transmission power of the nodes increases, the additive noise in the network becomes negligible. Furthermore, Fig. \ref{fig:fig1} and Fig. \ref{fig:fig2} show that larger FA architectures i.e., a larger $\kappa$, result to a reduced outage performance. As expected, by increasing the size of FAs, the distance between their ports also increases, limiting the negative effect of the spatial correlation between the ports’ channels on the network performance. Finally, the agreement between the theoretical curves (solid lines) and the simulation results (markers) validates our mathematical analysis.

\begin{figure}
	\centering\includegraphics[width=0.55\linewidth]{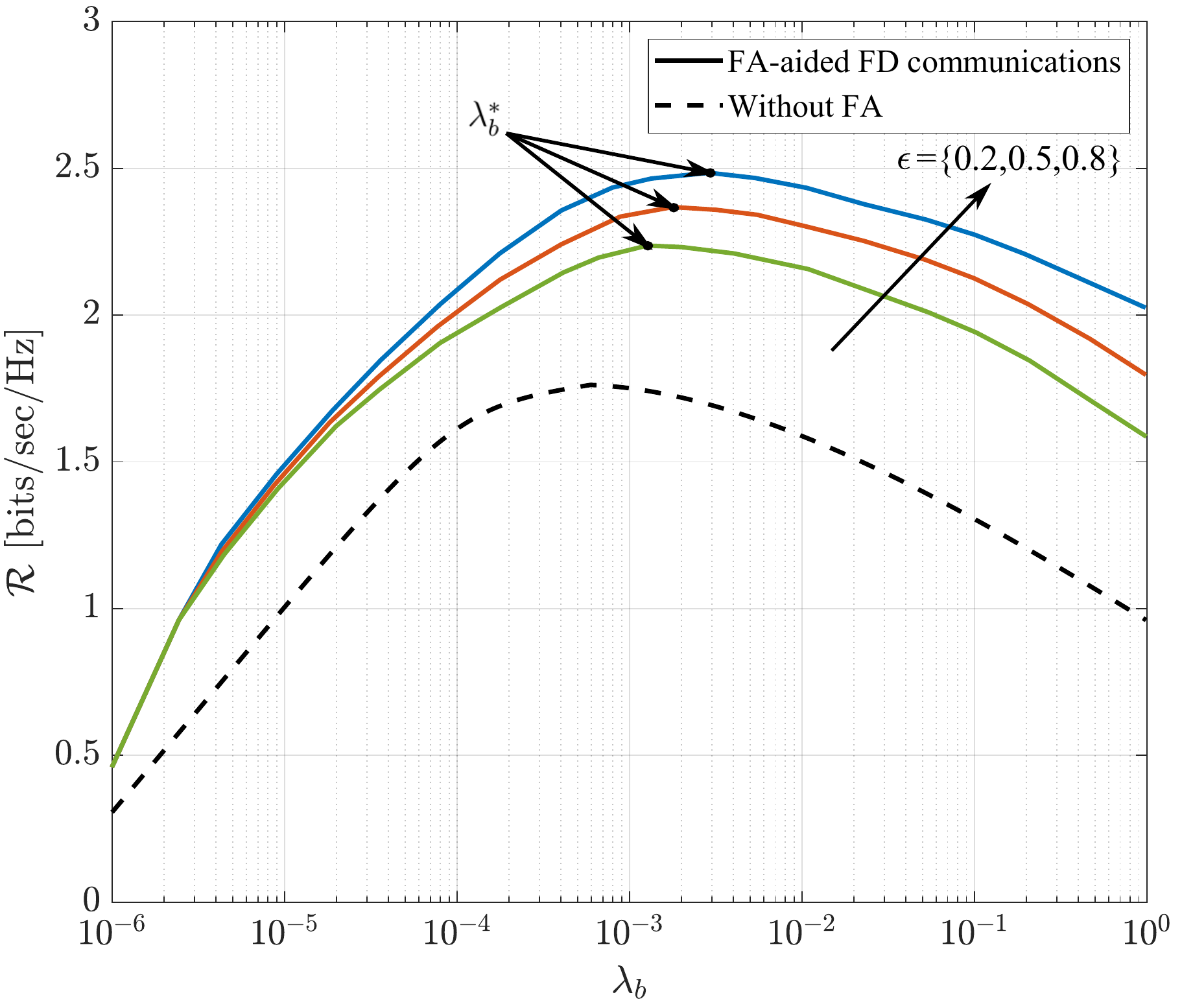}
	\caption{Average Sum-Rate performance versus the spatial density of BSs $\lambda_b$ for different $\epsilon$.}\label{fig3}\vspace{-15pt}
\end{figure}
Fig. \ref{fig3} illustrates the impact of both spatial density and power control on the average sum-rate performance for the considered FA-aided FD communications. In particular, we plot the average sum-rate performance, $\mathcal{R}$, versus the spatial density of BSs, $\lambda_b$, for different power control factors $\epsilon=\{0.2,0.5,0.8\}$. Initially, an interesting observation is that the sum-rate performance initially increases with $\lambda_b$ but, beyond a critical point, i.e. $\lambda_b^*$, it starts to decrease. This was expected since, at low density values, the increased number of BSs results in the reduction of the distance between the UEs and their serving BSs, thereby the observed SINR at both the BSs and the UEs is enhanced. Nevertheless, by further increasing the spatial density of the BSs, more and more links between BSs and UEs are activated, leading to a significant increase of the overall observed interference, and consequently the reduced ability of the network's receivers to decode the received signal. In addition, Fig. \ref{fig3} reveals the positive effect of the power control on the network's average sum-rate performance. This was expected since, the increased ability of UEs to compensate the path-loss by increasing the power control factor $\epsilon$, leading to a reduced UL outage performance and consequently to a significantly improved average sum-rate performance. {For comparison purposes, we also present the performance achieved by the conventional FD communications (i.e., $N=1$), where both BSs and UEs operate in FD mode and are equipped with a single omnidirectional antenna, denoted as ``Without FAs''.} We can easily observe that, the employment of FAs in the context of FD communications elevates the spatial diversity gain, leading to an improved average sum-rate performance with respect to the non-FA networks counterpart. {We also numerically investigate the critical spatial density that maximizes the achieved sum-rate performance in terms of the different power control factors, i.e. $\lambda_b^*=\{1.230,1.851,2.974\}\times10^{-3}$ for $\epsilon=\{0.2,0.5,0.8\}$, respectively.}

\begin{figure}
	\centering\includegraphics[width=.55\linewidth]{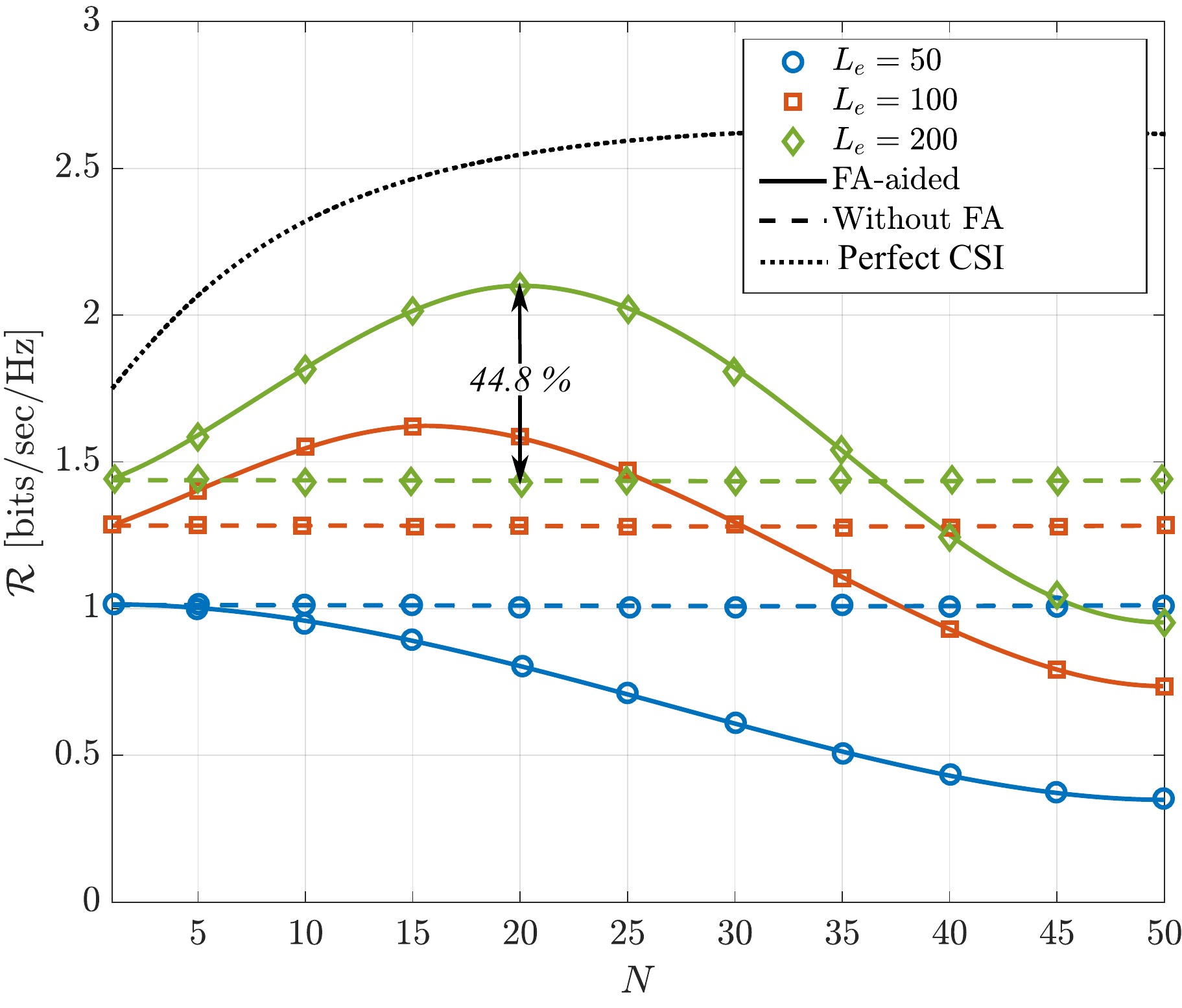}
	\caption{Average Sum-Rate performance versus the number of ports $N$ for different $L_e$.}\label{fig4}\vspace{-15pt}
\end{figure}
Fig. \ref{fig4} demonstrates the impact of the number of FAs' ports on the average sum-rate performance achieved by FA-aided FD communications. More specifically, we plot the average sum-rate $\mathcal{R}$ achieved by FA-aided FD communications as well as by the conventional FD communications without FAs, versus the number of FAs' ports, $N$, for different number of pilot-training symbols $L_e = \{50,100,200\}$. The first main observation is that for a small number of pilot-training symbols e.g., $L_e=50$ symbols, the presence of additional ports results in a reduced FA-aided FD network's performance. This is due to the fact that, under the considered limited coherence interval scenario, the number of pilot-training symbols dedicated for the CE of each port decreases, thereby the quality of the CE is reduced (i.e., $\sigma_{e_i}\rightarrow 1$), jeopardizing the achieved network performance. On the other side, due to the existence of a single antenna element in the context of conventional FD communications, the allocated training symbols are sufficient to achieve satisfactory CE quality, achieving better network performance compared to that of the FA-aided FD communications. Nevertheless, by further increasing the number of pilot-training symbols e.g., $L_e=\{100,200\}$, a sufficient number of training symbols can be allocated for the CE of all FA's ports. Therefore, by increasing the number of FAs' ports, the average sum-rate performance achieved by the considered network deployments improves. This observation can be explained by the fact that, a higher receive diversity gain can be achieved with the increased number of FA ports, and therefore, enhanced DL and UL SINR are observed. However, by further increasing the number of FAs' ports beyond a critical point, the network performance reduces. This is justified by the fact that, for a large number of FAs' ports, the allocated number of training-pilot symbols for the estimation of the channel of each port is decreased, thereby the quality of the CE is diminished, alleviating the achieved network performance. Thus, for a sufficient pilot-training symbols, the synergy of FA technology and FD radio is beneficial, providing an increase of the average sum-rate performance by around 45\% compared with the conventional FD communications, for the scenario where all FAs are equipped with $N = 20$ ports. For comparison purposes, we also present the outage performance obtained with a perfect (a-priori) CSI, denoted as ``Perfect CSI''. We can easily observe that, in contrast to the scenario considered with CE error, the network performance with a perfect CSI is constantly increasing with the increase of FA ports, posing an upper bound for the performance achieved by FA-aided FD communications. This is due to the fact that the negative effect of channel estimation quality on the network's performance is neglected.

Fig. \ref{fig5} shows the trade-off between the average sum-rate performance and the DL outage performance with respect to the number of ports, i.e $N=\{1,5,10,\dots,50\}$ and voltage gradient $\Delta\phi=\{1,10,100\}$ Volts (V). Each point in the curves represents the trade-off between the two performance metrics for a given number of ports. As mentioned before, the experienced outage performance of a FA-based UE reduces with the increase of the number of ports. In addition, Fig. \ref{fig5} illustrates that the average sum-rate performance initially increases with $N$ but, after a certain value of $N$, it starts to increase. Fig. \ref{fig5} also illustrates the impact of the considered MEMS (i.e., the system responsible for the flow movement of the liquid radiating element of FAs) on the achieved network performance. In particular, by increasing the voltage gradient ($\Delta\phi$) along the FA, the DL outage performance decreases while the average sum-rate performance increases. This was expected as, by applying a higher voltage gradient across the FA, the liquid radiating element can move at a higher speed requiring fewer switching channel uses ($l_s$), and thus, owing to the allocation of more channel uses for the CE of the FAs' links, the CE quality as well as the observed SINR are enhanced. An interesting observation is that, the optimal number of FAs' ports that maximizes the average sum-rate performance increases with the increase of the applied voltage gradient. This can be explained by the fact that, by increasing the voltage gradient along the FA, the velocity of fluid metal increases while the incurred transition delay reduces, enabling the sufficient CE of more FAs' ports, leading to an improved network performance. The same behavior is also observed for the trade-off between the average sum-rate performance and the UL outage probability.
\begin{figure}
	\centering\includegraphics[width=.55\linewidth]{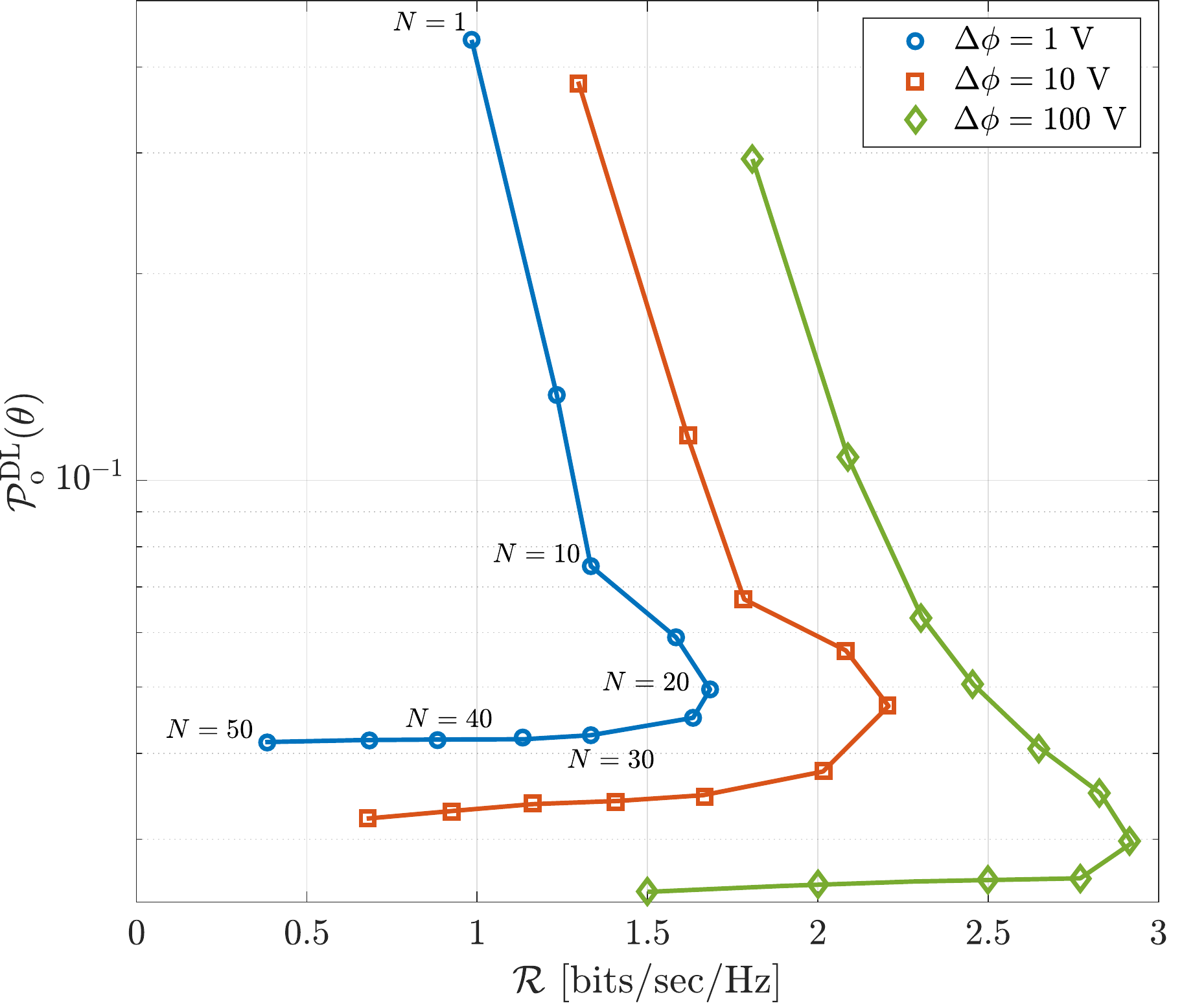}
	\caption{DL outage versus average sum-rate performance for different $N$ and $\Delta\phi$; $\theta=-20$ dB.}\label{fig5}\vspace{-15pt}
\end{figure}
\section{Conclusion}\label{Conclusion}
In this paper, we presented an analytical framework based on stochastic geometry and studied the performance of large-scale FA-aided FD cellular networks. The developed mathematical framework takes into account the employment of a LMMSE-based CE technique by all network's nodes, and also captures the presence of both CE error and channel correlation effects. We derive analytical expressions for both the outage and the average sum-rate performance, and the impact of nodes density, power control, block length, and number of FAs' ports has been discussed. Our results highlight the impact of the FAs' architecture and the network topology on the optimal number of FA and ports, providing guidance for the planning of FA-aided FD cellular networks in order to achieve enhanced network performance. Finally, we have shown that the combination of FD radio with FAs provides an increase of around $45\%$ in terms of average sum-rate performance achieved by the cellular networks, compared to that achieved by the conventional static FD communications. A future extension of this work is the consideration of FA technology in the context of MIMO systems.
\appendices
\section{Proof of Lemma \ref{MeanInterferenceDL}}\label{Appendix4}
The mean of the random variable $\left(I^{\rm UL}\right)_{\rm BS}$ conditioned on the random distance $\rho$ between the serving UE and the typical BS, which we denote as $\mathbb{E}\left(\left(I^{\rm UL}\right)_{\rm BS}|\rho\right)$, can be evaluated with the use of probability generating functionals (PGFLs). We first calculate the Laplace transform of the $\left(I^{\rm UL}\right)_{\rm BS}$ as follows
\begin{align}
	\mathcal{L}_{\left(I^{\rm UL}\right)_{\rm BS}}(s) &= \mathbb{E}\left[\exp\left(-s\left(I^{\rm UL}\right)_{\rm BS}\right)\right]\nonumber\\&=\mathbb{E}\left[\exp\left(-sP\sum\nolimits_{\substack{t\in\mathbb{N}^+ \\ x_t\in\Phi\backslash x_0}} \ell(r_{i}(\|x_t\|))|g_{ti}|^2\right)\right]\label{Proof41}\\
	&=\exp\left(-\int_{r_{i}(\rho)}^\infty\left(1-\frac{1}{1+sPx^{-a}}\right)\lambda_b\diff x\right)\label{Proof43}
\end{align}
where \eqref{Proof41} use the definition of $\left(I^{\rm UL}\right)_{\rm BS}$; \eqref{Proof43} is due to the probability generating functional for a PPP with spatial density $f_i(r)$ and by assuming that the distance between the $i$-th port of the typical UE and the the interferer is dominated by their spatial distance i.e., $r_{i}(\|x_t\|)\approxeq \|x_t\|$. The $n$-th moment of the interference power $\mathcal{I}$, can be calculated as
\begin{equation*}
	(-1)^n\frac{\diff^n}{\diff s^n}\mathcal{L}_{\mathcal{I}}(s)\Big|_{s=0}.
\end{equation*}
Thus, the mean and the variance of the $\left(I^{\rm UL}\right)_{\rm BS}$ can be evaluated as 
\begin{equation*}
	\mathbb{E}\left[\left(I^{\rm UL}\right)_{\rm BS}\right]=-\frac{\diff}{\diff s}\mathcal{L}_{\left(I^{\rm UL}\right)_{\rm BS}}(s)\Big|_{s=0}
\end{equation*}
and
\begin{equation*}
	{\rm var}\!\left[\!\left(I^{\rm UL}\right)_{\rm BS}\!\right]\!=\!\frac{\diff^2}{\diff s^2}\mathcal{L}_{\left(I^{\rm UL}\right)_{\rm BS}}(s)\Big|_{s=0}\!+\!\left(\!\frac{\diff}{\diff s}\mathcal{L}_{\left(I^{\rm UL}\right)_{\rm BS}}(s)\Big|_{s=0}\!\right)^2.
\end{equation*}
The mean of the random variable $\left(I^{\rm UL}\right)_{\rm UE}$ conditioned on the random distance $\rho$ can be expressed by following a similar methodology. Hence, the expressions in Lemma \ref{MeanInterferenceUL} are derived.
\vspace{-10pt}
\section{Proof of Proposition \ref{Proposition1}}\label{Appendix1}
By leveraging the orthogonality property of the LMMSE estimator, we have $\mathbb{E}\left[\hat{g}_{0i}\left(e_i\right)^*\right]=0$, where $g_{0i}=\hat{g}_{0i}+e_i$. Hence, the estimation variance is given by $\sigma^2_{e_i}\triangleq\mathbb{E}\left[\left(e_i\right)^2\right]=1-\mathbb{E}\left[\left|\hat{g}_{0i} \right|^2 \right]$where, by using \eqref{VarianceProof}
\begin{equation*}
	\mathbb{E}\left[\left|\hat{g}_{0i} \right|^2 \right]\!=\! \left(\!\frac{\sqrt{\Lambda P\ell(r_{i}(\rho))}}{\Lambda P\ell(r_{i}(\rho))\!+\!N_0\!+\!\mathbb{E}(\mathcal{I}_i|\rho)}\!\right)^2\!\!\mathbb{E}\!\left[\!\left(\widetilde{y}_i\right)^*\!\widetilde{y}_i\right].
\end{equation*}
The final expression follows by evaluating the expectation.
\vspace{-10pt}

\section{Proof of Lemma \ref{VarianceLI}}\label{ProofAppendix}
The baseband equivalent received pilot signal for the LI CE at the $i$-th port of the typical UE, is given by
\begin{equation*}
	\mathbf{y}_{\rm LI}^u = \sqrt{\Lambda_u P_u(r_{i}(\rho))v_{\rm LI}}h^u_{\rm LI} \mathbf{X}_0+\sum\nolimits_{\substack{t\in\mathbb{N}^+ \\ y_t\in\Psi\backslash x_0}}\sqrt{P_u(\|\overline{y_t}\|)\ell\left(r_{i}(\|y_t\|)\right)}g_{ti}\mathbf{X}_t+\eta_0,
\end{equation*}
where $\mathbf{X}_0$ is a deterministic $\Lambda_u\times 1$ training symbol vector \cite{KAY}, $\mathbf{X}_t\stackrel{d}{\sim}\mathcal{CN}\left(\mathbf{0}_{\Lambda_u\times 1},\mathbf{I}_{\Lambda_u}\right)$ is a pilot symbol vector from the interfering UE at $x_t\in\Psi\backslash x_0$, with channel $g_{ti}\stackrel{d}{\sim}\mathcal{CN}(0,1)$, and $\eta_0\stackrel{d}{\sim}\mathcal{CN}\left(\mathbf{0}_{\Lambda_u\times 1},N_0\mathbf{I}_{\Lambda_u}\right)$ is the AWGN vector. By adopting the low-complexity LMMSE estimator, the estimate of $h^u_{\rm LI}$ is given by
\begin{equation*}
	\hat{h}_{\rm LI}|_{\rho}\!=\!\frac{\sqrt{\Lambda_u P_u(r_{i}(\rho))v_{\rm LI}}\sigma^2}{\Lambda_u P_u(r_{i}(\rho))\ell\left(r_{i}(\rho)\right)\sigma^2\!+\!N_0\!+\!\mathbb{E}\left(\left(\mathcal{I}_i^{\rm DL}\right)_{\rm UE}|\rho\right)}\widetilde{y}_{\rm LI}^u,
\end{equation*}
where $\mathbb{E}\left((\mathcal{I}_i^{\rm DL})_{\rm UE}|\rho\right)$ is the mean of interference that is caused by active interfering UEs and observed at the typical UE, that are given by \eqref{MeanDLUE} and \eqref{MeanULBS}. Thus, the variance of the CE error for the LI link at the $i$-th port of the typical UE and the tagged BS, can be derived by following a similar methodology as in Appendix \ref{Appendix1}.
\vspace{-10pt}
\section{Proof of Lemma \ref{Lemma1}}\label{Appendix2}
According to the adopted LMMSE-based CE technique, the channel between the $i$-th port of the typical UE and the tagged BS, can be indicated as $\widehat{g}_{{\rm 0}i}=g_{{\rm 0}i}+e_i$, where $g_{{\rm 0}i}\stackrel{d}{\sim}\mathcal{CN}(0,\sigma^2)$ and $e_i\stackrel{d}{\sim}\mathcal{CN}\left(0,\sigma^2_{e_i}\right)$, and thus, $\widehat{g}_{{\rm 0}i}\stackrel{d}{\sim}\mathcal{CN}\left(0,\widetilde{\sigma}_i^2\right)$ with $\widetilde{\sigma}_i^2=\sigma^2(1-\mu_i^2)+\sigma^2_{e_i}$. Then, the amplitude of the estimated channels, $|\widehat{g}_{{\rm 0}i}|$, is Rayleigh distributed, with pdf 
\begin{equation}\label{pdf}
	f_{|\widehat{g}_{{\rm 0}i}|}(\tau) = \frac{2\tau}{\widetilde{\sigma}_i^2}\exp\left(-\frac{\tau^2}{\widetilde{\sigma}_i^2}\right),
\end{equation}
with $\mathbb{E}\left[|\widehat{g}_{{\rm 0}i}|^2\right]=\widetilde{\sigma}_i$. Owing to the capability of the FA's ports to be arbitrarily close to each other, the channels $\{\widehat{g}_{{\rm 0}1},\dots,\widehat{g}_{{\rm 0}N}\}$. More specifically, the amplitude of the estimated channel $|\widehat{g}_{{\rm 0}2}|$, conditioned on $\rho$ and $|\widehat{g}_{{\rm 0}1}|$, follows a Rice distribution i.e.,
\begin{equation*}
	f_{\left|\widehat{g}_{{\rm 0}2}\right|\big|\left|\widehat{g}_{{\rm 0}1}\right|}(\tau_2|\rho,x_0,y_0)\!=\!\frac{2\tau_2}{\widetilde{\sigma}_i^2}\exp\left(-\frac{\tau_2^2+\mu^2_2(x_0^2+y_0^2)}{\widetilde{\sigma}_i^2}\right) I_0\left(\frac{2\mu_2\sqrt{x_0^2+y_0^2}\tau_2}{\widetilde{\sigma}_i^2}\right),
\end{equation*}
where $\tau_2\geq 0$. Then, by substituting $\tau_1\rightarrow\sqrt{x_0^2+y_0^2}$ and based on the fact that $x_0,y_0,\left|\widehat{g}_{{\rm 0}2}\right|,\dots,\left|\widehat{g}_{{\rm 0}N}\right|$ are all independent between each other, the joint pdf of the estimated channels, conditioned on $\left|\widehat{g}_{{\rm 0}1}\right|$, can be expressed as
\begin{equation*}
	f_{\left|\widehat{g}_{{\rm 0}2}\right|,\dots,\left|\widehat{g}_{{\rm 0}N}\right|\big|\left|\widehat{g}_{{\rm 0}1}\right|}(\tau_2,\dots,\tau_{N}|\rho,\tau_1)=\prod_{i\in\mathcal{N}}\frac{2\tau_i}{\widetilde{\sigma}_i^2}\exp\left(-\frac{\tau_i^2+\mu^i_2\tau_1^2}{\widetilde{\sigma}_i^2}\right)I_0\left(\frac{2\mu_i\tau_1\tau_i}{\widetilde{\sigma}_i^2}\right).
\end{equation*}
Hence, the final expression can be obtained by un-conditioning the above expression with the pdf of $\left|\widehat{g}_{{\rm 0}1}\right|$ given in \eqref{pdf} i.e., $f_{\left|\widehat{g}_{{\rm 0}2}\right|,\dots,\left|\widehat{g}_{{\rm 0}N}\right|\big|\left|\widehat{g}_{{\rm 0}1}\right|}(\tau_2,\dots,\tau_{N}|\rho,\tau_1)f_{\left|\widehat{g}_{{\rm 0}1}\right|}(\tau_1)$, which gives the desired expression.

By leveraging the derived expression for the joint pdf which is given by \eqref{PDF}, the joint cdf of $\{\widehat{g}_{{\rm 0}1},\dots,\widehat{g}_{{\rm 0}N}\}$ follows directly from the definition i.e.,
\begin{align}
	&F_{\left|\widehat{g}_{{\rm 0}1}\right|,\dots,\left|\widehat{g}_{{\rm 0}N}\right|}(\tau_1,\dots,\tau_{N}|\rho)\triangleq\int_{0}^{\tau_1}\cdots\int_{0}^{\tau_{N}}f_{\left|\widehat{g}_{{\rm 0}1}\right|,\dots,\left|\widehat{g}_{{\rm 0}N}\right|}(\tau_1,\dots,\tau_{N}|\rho){\rm d}\tau_1\cdots{\rm d}\tau_{N}\nonumber\\&\!=\!\int_0^{\tau_1}\!\frac{2t_1}{\widetilde{\sigma}_1^2}\exp\!\left(-\frac{t_1^2}{\widetilde{\sigma}_1^2}\right)\!\prod_{j\in\mathcal{N}}\Bigg[\int_0^{\tau_j}\frac{2t_j}{\widetilde{\sigma}_j^2}\exp\left(-\frac{t_j^2+\mu^2_jt_1^2}{\widetilde{\sigma}_j^2}\right)I_0\left(\frac{2\mu_jt_1t_j}{\widetilde{\sigma}_j^2}\right){\rm d}t_j\Bigg]{\rm d}t_1,\label{Proof1}
\end{align}
where \eqref{Proof1} is derived with the use of \cite[2.20]{Marcumbook}, since the integral inside the product operator is an integration over the pdf of a Ricean random variable. The final expression can be derived by using the transformation $t=\frac{t_1^2}{\widetilde{\sigma}_1^2}$, which concludes the proof.

\vspace{-10pt}
\section{Proof of Theorem \ref{Theorem1}}\label{Appendix3}
Based on the adopted port selection scheme, described by \eqref{Association}, the outage performance for the DL transmission can be evaluated as follows
\begin{align}
	\mathcal{P}_o^{\rm DL}(\theta) &= \mathbb{P}\!\left[\!\max_{i\in\mathcal{N}}\!\Bigg\lbrace\!\frac{\frac{P}{r^a_{i}(\rho)}\left| \widehat{g}_{{\rm 0}i}\right|^2}{\mathcal{I}^{\rm DL}_i\!+\!\Sigma^{\rm DL}_i}\!\Bigg\rbrace\!<\!\theta\right]\nonumber\\&=\mathbb{P}\!\Bigg[\frac{\frac{P}{r^a_{1}(\rho)}\left| \widehat{g}_{{\rm 0}1}\right|^2}{\mathcal{I}^{\rm DL}_1\!+\!\Sigma^{\rm DL}_1}<\theta,\dots,\frac{\frac{P}{r^a_{N}(\rho)}\left| \widehat{g}_{{\rm 0}N}\right|^2}{\mathcal{I}^{\rm DL}_N\!+\!\Sigma^{\rm DL}_N}<\!\theta\Bigg]\nonumber\\
	&=\mathbb{E}\left[\mathbb{P}\!\Bigg[\!\!\left| \widehat{g}_{{\rm 0}\mathtt{i}}\right|^2\!<\!\Theta_{\rm DL}(\mathtt{i})\Big|\rho\Bigg]\right],
\end{align}
where $\Theta_{\rm DL}(i) = \frac{\theta r_i^{a}(\rho)}{P}\left(\mathcal{I}^{\rm DL}_i\!+\!\Sigma^{\rm DL}_i\right)$ and $\Sigma^{\rm DL}_i=\sigma_{e^u_{\rm LI}}^2\mathcal{I}_{\rm LI}^{\rm DL}\!+\!\frac{P}{r^a_{{i}}(\rho)}\sigma^2_{e_{i}}\!+\!N_0$. The conditional outage performance i.e., $\mathcal{P}_o^{\rm DL}(\theta|\rho,\mathcal{I}^{\rm DL}_{\mathtt{i}})=\mathbb{P}\!\Big[\!\!\left| \widehat{g}_{{\rm 0}\mathtt{i}}\right|^2\!<\!\Theta_{\rm DL}(\mathtt{i})\big|\rho\Big]$, of the considered system model, is given by substituting $\tau_1=\sqrt{\Theta_{\rm DL}(1)}$,$\dots$, $\tau_N=\sqrt{\Theta_{\rm DL}(N)}$ into the joint cdf that is given in Lemma \ref{Lemma1}, and by un-conditioning with the interference distribution i.e., $f_\mathcal{I}(x,r)$. Finally, by un-conditioning the derived expression with the pdf of the distance from the typical UE to its serving BS, the final expression can be derived. Following a similar procedure as described above, the UL outage probability can be derived. Due to space limitation, the analysis for the UL transmission is omitted.
\vspace{-15pt}

\end{document}